\documentclass[conference,letterpaper]{IEEEtran}
\IEEEoverridecommandlockouts
\usepackage{amssymb}
\setcounter{tocdepth}{3}
\usepackage{graphicx}
\usepackage{amssymb,amsfonts}
\usepackage{graphicx}
\usepackage{textcomp}
\usepackage{xcolor}
\usepackage{fancyhdr}
\usepackage{soul}
\usepackage{multirow}
\usepackage{diagbox} 
\usepackage{booktabs} 
  
\usepackage[noend]{algpseudocode}  
\usepackage{algorithm}
\usepackage{amsthm}
\usepackage{url}
\usepackage{cite}
\usepackage{booktabs}
\usepackage{multirow}
\usepackage{makecell}
\usepackage{threeparttable}
\usepackage{booktabs}
\usepackage{amsmath}

\newtheorem{lemma}{Lemma}
\newtheorem{remark}{Remark}

\usepackage{subfigure}

\usepackage{hyperref}
\usepackage{bm}
\usepackage{tikz}
\usetikzlibrary{decorations.pathmorphing} 
\usetikzlibrary{fit}                    
\usetikzlibrary{backgrounds}    
\usetikzlibrary{decorations.pathreplacing,calligraphy,shadows}
\tikzstyle{microblock}=[circle,thick,minimum size=1cm,draw=gray!80,fill=gray!20]
\tikzstyle{microblockH}=[circle,thick,minimum size=1cm, dashed, minimum size=0.8cm,draw=gray!80,fill=gray!20]

\tikzstyle{keyblockH}=[rectangle,thick,minimum size=1cm,draw=blue!80,fill=blue!20]
\tikzstyle{keyblockA}=[rectangle,thick,minimum size=1cm,draw=red!80,fill=red!20]
\tikzstyle{keyblockH_potential}=[rectangle,thick,dashed,minimum size=1cm,draw=blue!20,fill=blue!10]
\tikzstyle{keyblockA_potential}=[rectangle,thick,dashed,minimum size=1cm,draw=red!20,fill=red!10]

\usepackage{enumitem}
\setenumerate[1]{itemsep=0pt,partopsep=0pt,parsep=\parskip,topsep=5pt}
\setitemize[1]{itemsep=0pt,partopsep=0pt,parsep=\parskip,topsep=5pt}
\setdescription{itemsep=0pt,partopsep=0pt,parsep=\parskip,topsep=5pt}

\def\BibTeX{{\rm B\kern-.05em{\sc i\kern-.025em b}\kern-.08em
    T\kern-.1667em\lower.7ex\hbox{E}\kern-.125emX}}

\begin{document}

\title{Incentive Analysis of Bitcoin-NG, Revisited\\}

\author{
\IEEEauthorblockN{Jianyu Niu\IEEEauthorrefmark{1}, Ziyu Wang\IEEEauthorrefmark{2}, Fangyu Gai\IEEEauthorrefmark{1}, and Chen Feng\IEEEauthorrefmark{1}}
\IEEEauthorblockA{\IEEEauthorrefmark{1}School of Engineering, University of British Columbia (Okanagan Campus)}
\IEEEauthorblockA{\IEEEauthorrefmark{2}School of Cyber Science and Technology, Beihang University}
\IEEEauthorblockA{\IEEEauthorrefmark{1}{\{jianyu.niu, fangyu.gai, chen.feng\}@ubc.ca, \IEEEauthorrefmark{2}{wangziyu@buaa.edu.cn}}}
}

\maketitle
\cfoot{}
\pagestyle{fancy}
\rfoot{\thepage}

\begin{abstract}
Bitcoin-NG is among the first blockchain protocols to approach the \emph{near-optimal} throughput by decoupling blockchain operation into two planes: leader election and transaction serialization. Its decoupling idea has inspired a new generation of high-performance blockchain protocols. However, the existing incentive analysis of Bitcoin-NG has several limitations. First, the impact of network capacity is ignored. Second, an integrated incentive analysis that jointly considers both key blocks and microblocks is still missing. 

In this paper, we aim to address these two limitations. First, we propose a new incentive analysis that takes the network capacity into account, showing that Bitcoin-NG can still maintain incentive compatibility against the microblock mining attack even under limited network capacity.
Second, we leverage a Markov decision process (MDP) to jointly analyze the incentive of both key blocks and microblocks, showing that the selfish mining revenue of Bitcoin-NG is a little higher than that in Bitcoin only when the selfish miner controls more than $35\%$ of the mining power.
We hope that our in-depth incentive analysis for Bitcoin-NG can shed some light on the mechanism design and incentive analysis of next-generation blockchain protocols.
\end{abstract}

\section{Introduction} \label{sec:introduction}
Bitcoin---the current largest and most influential cryptocurrency---has sparked many other cryptocurrencies like Ethereum~\cite{buterin2014next} and Litecoin~\cite{reed2017litecoin}, gaining much attention from both academia and industry \cite{nakamoto2012bitcoin}. 
The key innovation behind Bitcoin is \emph{Nakamoto Consensus} (NC), which is used to realize a distributed ledger known as a blockchain. Blockchains have
unique features of decentralization, security, and privacy, making them a fundamental trust infrastructure for supporting various future decentralized Internet applications, ranging from IoT, health care, supply chain management, to clean energy~\cite{mettler2016blockchain, iot2019}.

Despite the popularity, Bitcoin and other NC-based blockchains have suffered from low throughput (e.g., $7$ TPS\footnote{TPS is short for transactions per second.} in Bitcoin) and poor network utilization (e.g., less than $2\%$ in Bitcoin~\cite{Decker2013}).
The low throughput of Bitcoin is mostly due to its choice of two system parameters: small block size (originally $1$ MB) and long block interval (on average $10$ minutes). Although increasing the block size or shortening the block interval can increase the throughput,
this reduces the security level of Bitcoin because forks are more likely to occur \cite{garay2015bitcoin, Pass2017, ghost}. 
Indeed, it has been shown in various studies~\cite{CDE+-16, bitcoinng, ghost} that redesigning the underlying NC (rather than fine-tuning the system parameters) is essential to improve the throughput without sacrificing security.

Bitcoin-NG (Next Generation) ~\cite{bitcoinng} is among the first and the most prominent NC-based blockchains to approach the \emph{near-optimal} throughput.
Bitcoin-NG creatively employs two types of blocks: $1$) a \emph{key block} that is very similar to a conventional block in Bitcoin except that it doesn't carry any transactions, and $2$) a \emph{microblock} that carries transactions. Every key block is generated through the leader election process (often known as the mining process) in NC, and the corresponding leader will receive a block reward (if its key block ends up in the longest chain). In addition, this leader can issue multiple microblocks and receive the transaction fees until the next key block is generated. Unlike Bitcoin, Bitcoin-NG decouples leader election and transaction serialization. Intuitively, it is this decoupling that enables Bitcoin-NG to approach the near-optimal throughput, since the microblocks can be produced at a rate up to the network capacity. Perhaps for this reason, Bitcoin-NG has been adopted by two cryptocurrencies: Waves\footnote{Waves: https://docs.wavesplatform.com/} and Aeternity\footnote{Aeternity: https://aeternity.com/}.

More importantly, this decoupling idea has inspired a new generation of blockchain protocols including ByzCoin~\cite{ByzCoin}, Hybrid consensus~\cite{pass20171}, Prism~\cite{bagaria2019deconstructing}, and many others. Although these protocols are able to achieve lower latency and/or higher throughput than Bitcoin-NG, their incentive mechanism design and analysis still remain unclear. 
{Such incentive analysis is particularly important for understanding incentive-based attacks, in which all the nodes are assumed to be rational and profit driven.}
Nevertheless, even the existing incentive analysis of Bitcoin-NG has several limitations, as we will explain shortly. As a starting point to bridge this research gap, we aim to provide an in-depth incentive analysis for Bitcoin-NG, hoping that it would shed some light on the mechanism design and incentive analysis of aforementioned next-generation blockchain protocols.

The prior work of Bitcoin-NG found that Bitcoin-NG cannot maintain the incentive compatibility\footnote{The expected relative revenue of a miner should be proportional to its mining power.} of microblocks when an adversary controls more than $29\%$ of the total computation power \cite{bitcoinng, yin18bngrrevisit}. In addition, an adversary in Bitcoin-NG can gain a higher share of block reward than in Bitcoin, making Bitcoin-NG more vulnerable~\cite{ziyu2019}. Despite these important findings, previous incentive analysis of Bitcoin-NG has the following limitations. First, previous analysis completely ignores the impact of network capacity \cite{bitcoinng, yin18bngrrevisit, ziyu2019}. \emph{How can we take into account the network capacity constraints}?
Second, previous analysis mostly focuses on microblocks. (See, e.g., Sec.~\ref{sec:original analysis} for details.) \emph{How can we take into account the effect of key blocks}?

To answer the first question, we develop a new probabilistic analysis that takes network capacity into account. In particular, we model the interval between two consecutive key blocks as an exponential random variable and introduce the generation rate of microblocks to capture the impact of network capacity. Then, we apply the Chernoff-type bounding techniques to derive the long-term average revenue of the adversary. We find that by choosing suitable system parameters, Bitcoin-NG can still maintain incentive compatibility even under network capacity constraints. In other words, introducing network capacity constraints doesn't make it harder to maintain the incentive compatibility. More specifically, when the adversary controls less than $29\%$ of the mining power, the incentive compatibility of Bitcoin-NG can be maintained for all types of transactions. When the adversary controls more than $29\%$ of the mining power, the incentive compatibility can be maintained for regular transactions but not for whale transactions with high fees.

To address the second question, we leverage a Markov decision process (MDP) model to jointly analyze the incentive of key blocks and microblocks. Although similar analysis has been conducted by Sapirshtein et al.~\cite{sapirshtein2016optimal} in the context of Bitcoin\footnote{Due to the similarity, the MDP can be directly used to model the key-block mining in Bitcoin-NG.}, the microblock structure in Bitcoin-NG introduces additional complexity for the MDP design (e.g., more mining strategies and rewards). 
To make the MDP tractable, we confine our analysis to a family of selfish mining strategies\footnote{This family is broader than that in the existing analysis, as discussed in Sec.~\ref{sec:comp2}.}. Our results show that the optimal selfish mining revenue in Bitcoin-NG is just a little higher than that in Bitcoin when the selfish computation power is greater than $35\%$.

\noindent \textbf{Contributions:} The contributions of this paper are summarized as follows:
\begin{itemize}[leftmargin=*]
\item We propose a new incentive analysis of Bitcoin-NG considering the network capacity constraints. 
Our results show that Bitcoin-NG can still maintain incentive compatibility  against the microblock mining attack.

\item We model the selfish mining of key blocks and microblocks jointly into an MDP. 
Our results show that the selfish mining revenue in Bitcoin-NG is a little higher than that in Bitcoin only when the selfish mining power $\alpha$ is greater than $35\%$. 

\item We show the distribution of transaction fees by scanning transactions in a recent block history of Bitcoin, which supports our assumptions in our system model. 
\end{itemize}

\section{Background} \label{sec:background}
\subsection{A Primer on Bitcoin}
Bitcoin relies on Nakamoto Consensus (NC) to make a group of distributed and mutually distrusting participants reach agreement on a transparent and immutable ledger, also known as the blockchain. A blockchain is a list of blocks linked by hash values with each block containing a batch of ordered transactions. To make all participants agree on the same chain of blocks, NC leverages two components: the Proof-of-Work (PoW) mechanism and the longest chain rule (LCR). Each participant in NC (also referred to as a miner) collects valid and unconfirmed transactions from the network, orders and packs these transactions into a block. In addition, a valid block needs to contain a proof of work, i.e., its owner needs to find a value of the nonce (i.e., a changeable data filed) such that the hash value of this block has required leading zeros~\cite{nakamoto2012bitcoin}. The length of leading zeros is also known as the mining difficulty, which can be tuned by the system so that new blocks are mined every ten minutes on average. 

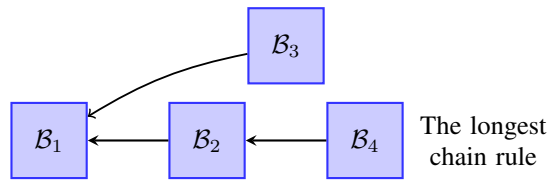
\begin{figure}[!ht]
\centering
\begin{tikzpicture}[x=0.7cm,y=0.7cm]
    \node (k1) at (0 , 1) [keyblockH] {$\mathcal{B}_1$};
    \node (k2) at (3 , 1) [keyblockH] {$\mathcal{B}_2$};
    \node (k3) at (4.5 , 2.8) [keyblockH] {$\mathcal{B}_3$};
    \node (k4) at (6 , 1) [keyblockH] {$\mathcal{B}_4$};
    \node (k5) at (8.25 , 1) [text width=2cm,align=center] {The longest\\chain rule};
    \draw [thick,->,>=stealth] (k2) -- (k1);
    \draw [line width=0.7pt,->, bend left=-10] (k3) edge (k1);
    \draw [thick,->,>=stealth] (k4) -- (k2);
\end{tikzpicture}
\caption{\textbf{An illustration of the chain structure in Bitcoin.}}
\label{fig:bitcoin}
\end{figure}

Once a new block is produced, it will be immediately broadcast to the entire network. 
Ideally, the block should be accepted by all participants before the next block is produced.
In reality, two new blocks might be mined around the same time, leading to a fork in which two ``child'' blocks share a common ``parent" block. 
To resolve such a fork, an honest miner always accepts the longest chain as the valid one. 
See Fig.~\ref{fig:bitcoin} for an illustration. 
Block $B_3$ is a forking block, which will be abandoned by the honest miners according to the longest chain rule. 
In Bitcoin, a block miner will receive a block reward (if its block is eventually included in the longest chain) as well as transaction fees as another type of reward. These incentives encourage miners to devote their computational resources to the system. 

\noindent \textbf{Selfish Mining.} Although NC is designed to fairly reward miners according to their contributions to the system (i.e., miners' revenue is proportional to their devoted computation power), the studies in~\cite{eyal2014majority, sapirshtein2016optimal, nayak2016stubborn, gervais2016security} show that a selfish miner can gain more revenue than its fair share by deviating from the protocol. This mining attack is called \emph{selfish mining}. In this attack, a selfish miner can keep its newly generated blocks secret, mine on top of these blocks, and create forks on purpose when necessary. In particular, when some honest miner generates a new block, a selfish miner will publish one secret block to match this honest block as a competition or publish two blocks to override this honest block because honest miners follow LCR. 
A particular example is depicted in Fig.~\ref{fig:bitcoin}. 
A selfish miner successfully mines two consecutive blocks $B_2$ and $B_4$ in advance, and it then receives a block $B_3$ mined by some honest miners. It will immediately publish its two blocks. In this way,
it will not only get two block rewards for $B_2$ and $B_4$ but also make the honest block $B_3$ abandoned by all the miners.
In other words, a selfish miner can obtain higher revenue than it deserves by invalidating honest blocks. Indeed,
Sapirshtein et al. have used an MDP to model various selfish mining strategies and concluded that the optimal computation power threshold making selfish mining profitable is $23.21\%$ \cite{sapirshtein2016optimal}.

\subsection{A Primer on Bitcoin-NG} \label{subsec:bitcoinng}
In Bitcoin, the mining of blocks has two functionalities: $1$) electing leaders (i.e., the owners of valid blocks) by NC, and $2$) ordering and verifying transactions. 
By differentiating block functionalities, Bitcoin-NG decouples the leader election with the transaction serialization. Specifically, Bitcoin-NG uses key blocks mined through PoW to elect a leader at a stable rate (e.g., one key block per 100 seconds). 
Each leader can produce several microblocks containing unconfirmed transactions at another rate, often higher than the key block rate (e.g., one microblock per 20 seconds).
In a nutshell, a key block is very similar to a conventional block in Bitcoin except that it does not carry any transactions. On the other hand, microblocks contain transactions but do not contain any proof of work. 
Although the rate of microblocks is usually much larger than the key block generation rate, it has to be bounded
in order to prevent adversarial leaders from swamping the system with microblocks.
This decoupling enables Bitcoin-NG to process many microblocks between two consecutive key blocks, and significantly increasing its transaction throughput. Fig.~\ref{fig:bitcoinng} illustrates these two types of blocks. 
\begin{figure}[!ht]
\centering
\begin{tikzpicture}[x=0.8cm,y=0.8cm]
    \node (k1) at (0 , 1) [keyblockH] {$\mathcal{B}_j$};
    \node (m1) at (1.8, 1) [microblock] {$\sigma_{j}^{1}$};
    \node (m2) at (3.6, 1) [microblock] {$\sigma_{j}^{2}$};
    \node (k2) at (5.4, 1) [keyblockH] {$\mathcal{B}_{j+1}$};
    \node (m7) at (7.2, 1) [microblock] {$\sigma_{\scriptscriptstyle j+1}^{\scriptscriptstyle 1}$};
    \node (m8) at (9, 1) [microblock] {$\sigma_{\scriptscriptstyle j+1}^{\scriptscriptstyle 2}$};
    \node (k3) at (3.6, 2.5) [keyblockH] {$\mathcal{B}_{j+2}$};
    \node (m5) at (5.4, 2.5) [microblock] {$\sigma_{j+2}^{1}$};
    \draw [thick,->,>=stealth] (m1) -- (k1);
    \draw [thick,->,>=stealth] (m2) -- (m1);
    \draw [thick,->,>=stealth] (k2) -- (m2);
    \draw [thick,->,>=stealth] (m7) -- (k2);
    \draw [thick,->,>=stealth] (m8) -- (m7);
    \draw [thick,->,>=stealth] (k3) -- (m1);
    \draw [thick,->,>=stealth] (m5) -- (k3);
\end{tikzpicture}
\caption{\textbf{An illustration of Bitcoin-NG.} A square (respectively, circle) block denotes the key block (respectively, microblock). The microblocks are issued by the three key-block miners $\mathcal{B}_{j},\mathcal{B}_{j+1},\mathcal{B}_{j+2}$ with their signatures $\sigma_{j},\sigma_{j+1},\sigma_{j+2}$, respectively.}
\label{fig:bitcoinng}
\end{figure}
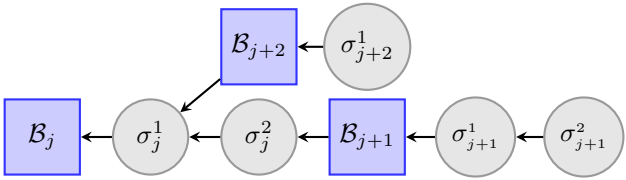

Bitcoin-NG adopts a similar fork choice rule as Bitcoin.  
In Bitcoin-NG, microblocks carry no weight, not even a secondary index for miners to choose which key block to mine. 
For instance, in Fig.~\ref{fig:bitcoinng}, there are two forking branches with the same number of key blocks but different numbers of microblocks. 
However, honest miners treat these two forking branches as equal and adopt a uniform tie-breaking rule to choose one branch~\cite{bitcoinng}.
Honest miners then mine on the latest microblock in this branch.
In a nutshell, an honest miner still follows LCR to choose a ``right'' key block (i.e., the last key block in the longest chain only consisted of key blocks), and then mine on the latest microblock produced by the key-block miner. 
Thus, without microblocks, the mining process of key blocks is the same as the one in Bitcoin. 
In fact, the selfish mining attack in Bitcoin can be used here to attack key blocks in Bitcoin-NG.
However, as Bitcoin-NG introduces additional microblocks, there are some new possible attacks for microblocks (which will be introduced shortly).

\subsection{Transaction Fee Distribution} \label{sec:discussion}
{Transaction fee is used to incentivize miners to include transactions in their blocks. Therefore, the higher the transaction fee is, the more miners try to include the transaction into the latest block. Fig.~\ref{fig:btcfeedist} shows the practical transaction fee distributions by scanning blocks with height from $627195$ to $627894$ in Bitcoin. The results show that about $77.8\%$ transactions have a quite small fee (less than $0.0001$ BTC). When relaxing to $0.0005$ BTC, the proportion accounts for around $98.5\%$. 
This findings supports our later transaction fee model: the vast majority of transactions in a blockchain system have small transaction fee.}

\begin{figure}[!ht]
\centering
\includegraphics[width=0.5\linewidth]{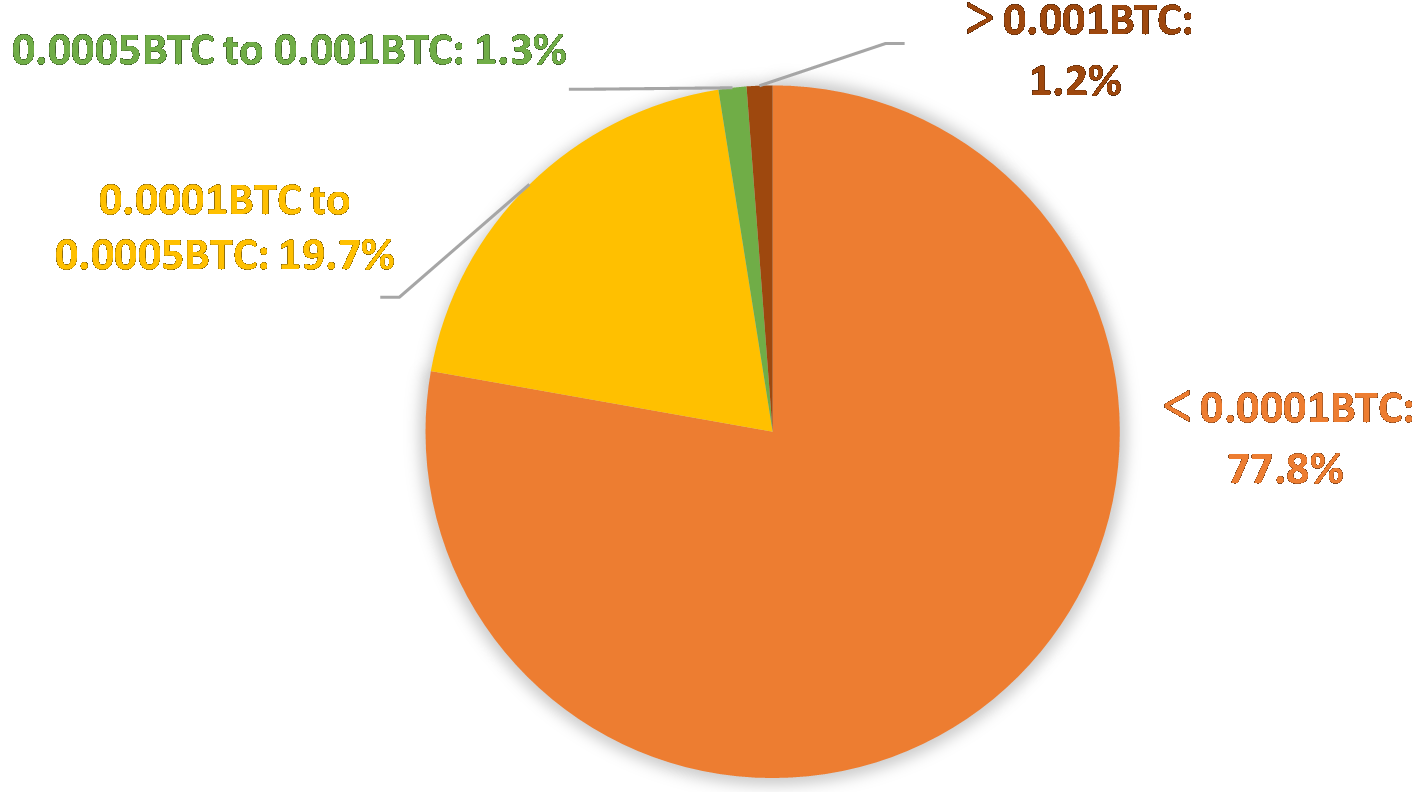}
\caption{\textbf{The transaction fee distribution in the blocks ranging from block height $627195$ to $627894$.}}
\vspace{-2mm}
\label{fig:btcfeedist}
\end{figure}

\section{System Model} \label{sec:model}
\subsection{Mining Model} \label{sec:mining model}
Following the mining models~\cite{eyal2014majority, bitcoinng, niu2019selfish}, we consider a collection of $n$ miners, denoted by the set $\mathcal{N}$.  
We assume a subset of miners $\mathcal{S} \subset \mathcal{N}$ are selfish and can deviate from the protocol to maximize their profit. The other miners in $\mathcal{N} \setminus \mathcal{S}$ are honest and follow the protocol. The $i$-th miner has $m_{i}$ fraction of the total hash power. In addition, the selfish (respectively, honest) miners control $\alpha$ (respectively, $\beta$) fraction of total mining power. That is, $\alpha = \sum_{i \in \mathcal{S}} m_i$ and $\beta = \sum_{i \in \mathcal{N} \setminus \mathcal{S}} m_i$. Clearly, $\alpha + \beta = 1$. 
We assume that all the selfish miners form a mining pool controlled by a single player, which is referred to as the selfish miner. 

The mining process of the key block can be modeled as a Poisson process with rate $f$, as shown in \cite{nakamoto2012bitcoin, bagaria2019deconstructing}\footnote{The key block is mined by solving the PoW puzzle, and the value of $f$ can be calculated by the average block interval, which is $10$ minutes in Bitcoin and $100$ seconds per block in Bitcoin-NG.}.
The key block mining process of the $i$th miner is also a Poisson process with the rate $m_i f$. Hence, the selfish miner generates key blocks at rate $\alpha f$, and the remaining honest miners (as a whole) generate key blocks at rate $\beta f$. In addition, the miner of each key block becomes a leader and can issue a series of microblocks containing as many transactions as possible (up to the maximum microblock size) at a constant rate $v$ until the next key block is mined (see Sec~\ref{subsec:bitcoinng}). 
Specifically, a block (including key block and microblock) mined by an honest (respectively, the selfish) miner is referred to as \emph{honest} (respectively, \emph{selfish}) block.

\subsection{Network Model}
Following the network model of Bitcoin~\cite{eyal2014majority, niu2019selfish}, we assume that honest miners are fully connected through the underlying network, and an honest miner spends negligible time to broadcast a key block or microblock in Bitcoin-NG\footnote{This assumption is reasonable for key blocks because the inter-arrival time of two consecutive key blocks is often much larger than the block propagation delay. On the other hand, this assumption can be relaxed for microblocks, as we will show later.}. In addition, we assume that the selfish miner can broadcast its private blocks immediately after it sees a new honest key block.

\subsection{Mining Rewards} \label{sec:mining rewards}
In Bitcoin-NG, there are two types of rewards, namely key-block reward and transaction fee. Every miner obtains a key-block reward if it mines a key block by successfully solving a PoW puzzle and its key block ends up in the longest chain. 
In addition, each transaction has a fee (i.e., transaction fee) as a reward to incentive a key-block miner to verify and execute this transaction.
This fee encourages miners to include as many transactions as possible (up to the microblock size limit). 
For simplicity, we assume two types of transactions according to their transaction fees: ``whale" transactions with a high fee and regular transactions with a low fee. Also, we assume that the vast majority of transactions are regular ones. These assumptions are made based on the fee distribution discussed in Sec.~\ref{sec:discussion}. 
Next, we assume that the transaction size is fixed, and so the maximum number of transactions included in a microblock is also fixed.  
We also assume that miners have enough pending transactions to be included in microblocks\footnote{This assumption is reasonable in Bitcoin and Ethereum-like public blockchains. For instance, a mempool visualization website~\cite{transactionpool} shows that the number of pending transactions is currently around $34,000$, which is about $100$ blocks (e.g., $1$ Mb block size) worth of transfer.}. 

We call a microblock regular if it contains only regular transactions. In addition, we refer to the total transaction fees included in a regular microblock as the microblock fee and use $R_t$ to denote it. 
In addition, we use $R_b$ to denote the key-block reward. Let $k = R_b/R_t$ denote the ratio of the block reward to the microblock fee. This ratio $k$ ranges from $(0, \infty)$. When $k$ approaches $0$ (respectively, $\infty$), it implies that the transaction fee (respectively, key-block reward) dominates the reward. The different values of $k$ exhibit the various impact of rewards on the Bitcoin-NG system.

\begin{figure}[!ht]
\centering
\begin{tikzpicture}[x=0.8cm,y=0.8cm]
    \node (k1) at (0 , 1) [keyblockA] {$\mathcal{B}_j$};
    \node (m1) at (1.75 , 2) [microblock] {$\sigma_j$};
    \node (m3) at (3.5 , 2) [fill=none] {$\cdots$};
    \node (m4) at (5 , 2) [microblock] {$\sigma_j$};
    \node (k2) at (6.5 , 2) [keyblockA_potential] {$\mathcal{B}_{j+1}$};
    \node (k3) at (6.5 , 0.5) [keyblockH_potential] {$\mathcal{B}_{j+1}$};
    \draw [line width=0.7pt,->,dashed,bend left=10] (k3) edge (m1);
    \draw [thick,->,>=stealth] (m1) -- (k1);
    \draw [thick,->,>=stealth] (m3) -- (m1);
    \draw [thick,->,>=stealth] (m4) -- (m3);
    \draw [dashed,->,>=stealth] (k2) -- (m4); 
    \node at (7, 2) [draw=none, anchor=west] {selfish key block};
    \node at (7, 0.5) [draw=none, anchor=west] {honest key block};
    \node at (0.75 , 0.5) [draw=none, anchor=west] {selfish key block};
\end{tikzpicture}
\caption{\textbf{An example of the transaction inclusion attack.} The first two microblocks after the selfish $\mathcal{B}_j$ have been published and so they are public to honest miners. The other microblocks are kept private. A dashed square block denotes a future mined block.} 
\label{fig:txinclusionatt}
\end{figure}
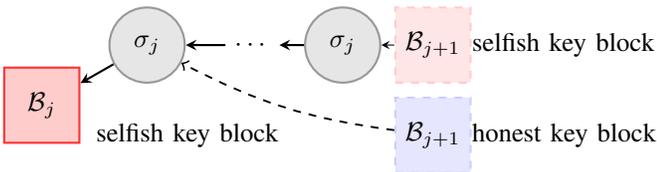

\subsection{Mining Strategies} \label{sec:mining strategy}
In this section, we introduce the mining strategies for honest and selfish miners. 
In particular, we focus on mining strategies for microblocks, since strategies for key blocks have been extensively studied~\cite{eyal2014majority,sapirshtein2016optimal,nayak2016stubborn}. 
In Bitcoin-NG, an honest key-block miner includes transactions in microblocks and publishes these microblocks to win transaction fees. This is called the transaction inclusion rule. In addition, an honest miner should accept as many microblocks issued by the previous key-block miner as possible and mine on the latest received microblock, i.e., obeying the longest chain extension rule. 
By contrast, a selfish miner could break the transaction inclusion and the longest chain extension rules to maximize its profit as explained below:
\begin{itemize}[leftmargin=*]
\item \textbf{Transaction inclusion attack.} 
When the selfish miner publishes one key block and generates multiple microblocks, it keeps the last several microblocks private. 
That is, the selfish miner continues to mine on top of its latest microblock chain, while honest miners can only mine on top of the last published microblock. 
Fig.~\ref{fig:txinclusionatt} shows the case in which the selfish miner withholds some of its microblocks mined after the key block $\mathcal{B}_j$, and honest miners mine on the last public microblock of the selfish miner. 
This attack is incentivized if transaction fees in microblocks go primarily to the next key-block owner. (See the transaction fee distribution in Sec~\ref{sec:original analysis}.)

\item \textbf{Longest chain extension attack.} 
When the selfish miner adopts an honest key block, it can reject some (or all) microblocks and mine directly on the last accepted microblock block (or the last key block, respectively). In other words, the selfish miner rejects the transactions in these microblocks issued by the previous honest key-block miner. 
This attack is illustrated in Fig.~\ref{fig:lchainextatt}. 
This attack is incentivized if transaction fees go primarily to the current key-block owner.

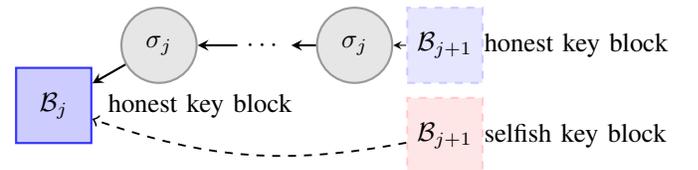
\begin{figure}[!ht]
\centering
\begin{tikzpicture}[x=0.8cm,y=0.8cm]
    \node (k1) at (0 , 1) [keyblockH] {$\mathcal{B}_j$};
    \node at (0.75,1) [draw=none, anchor=west] {honest key block};
    \node (m2) at (1.75, 2) [microblock] {$\sigma_j$};
    \node (m3) at (3.5 , 2) [fill=none] {$\cdots$};
    \node (m4) at (5, 2) [microblock] {$\sigma_j$};
    \node (k2) at (6.5  , 2) [keyblockH_potential] {$\mathcal{B}_{j+1}$};
    \node at (7 , 2) [draw=none, anchor=west] {honest key block};
    \node (k3) at (6.5  , 0.5) [keyblockA_potential] {$\mathcal{B}_{j+1}$};
    \node at (7 , 0.5) [draw=none, anchor=west] {selfish key block};
    \draw [line width=0.7pt,->,dashed,bend left=15] (k3) edge (k1);
    \draw [thick,->,>=stealth] (m2) -- (k1);
    \draw [thick,->,>=stealth] (m3) -- (m2);
    \draw [thick,->,>=stealth] (m4) -- (m3);
    \draw [dashed,->,>=stealth] (k2) -- (m4); 
\end{tikzpicture}
\caption{\textbf{An example of the longest chain extension attack.} The selfish miner rejects all the microblocks and mines it key block on top of the honest $\mathcal{B}_j$. A dashed square block denotes a future mined block.}
\label{fig:lchainextatt}
\end{figure}
\end{itemize}

\subsection{Mining Revenue} \label{sec:mining revenue}
The selfish miner is incentivized to find an optimal selfish mining strategy to increase its revenue. Specifically, the utility of the selfish miner can be defined as its relative revenue~\cite{eyal2014majority,sapirshtein2016optimal}, i.e., the ratio of the selfish miner's revenue (including key-block reward and transaction fees) to all miners' revenue. In other words, the selfish miner would like to increase its share of key-block reward and transaction fees generated by the system, which is given by 
\begin{equation}
\label{eq:revenue}
    u = \lim_{t \rightarrow \infty} \frac{r_{a}(t) + t_{a}(t)}{r_{a}(t) + r_{h}(t) + t_{a}(t) + t_{h}(t)}.
\end{equation}
Here, $r_{a}(t)$ and $t_{a}(t)$ are the key-block rewards and transaction fees won by selfish miners during time $[0,t]$, respectively. . Similarly, $r_{h}(t)$ and $t_{h}(t)$ are the key-block rewards and transaction fees won by honest miners. Note that all of them are random variables. As $t \to \infty$, the ratio in \eqref{eq:revenue} converges by the law of large numbers.

\section{Incentive Analysis for Microblock}  \label{sec:original analysis}
In this section, we present the prior incentive analysis of microblocks \cite{bitcoinng, yin18bngrrevisit}. 
In particular, the analysis does not consider the selfish mining of key blocks. That is, it assumes that the selfish miner always adopts honest miners' key blocks and immediately publishes its new key blocks. 
This assumption is justified shortly and will be relaxed by considering the joint mining of microblocks and key blocks in Sec.~\ref{sec:MDP}. 
In addition, as the propagation delay of key blocks is negligible, forked key blocks are also not considered. 

To resist the transaction inclusion attack and the longest chain extension attack, Bitcoin-NG divides the transaction fees included in microblocks between two consecutive key-block miners into two parts. The first key-block miner gets the $r$ fraction ($r \in [0, 1]$), while the second one obtains the remaining $1-r$ fraction. Fig.~\ref{fig:btcng} illustrates this fee distribution rule. The remaining subsections explain how to decide the value of $r$ to resist the microblock mining attacks. 
\begin{figure}[!ht]
\centering
\begin{tikzpicture}[x=1cm,y=1cm]
    \node (k1) at (0 , 1) [keyblockH] {$\mathcal{B}_j$};
    \node (m1) at (1.5 , 1) [microblock] {$\sigma_j$};
    \node (m2) at (3 , 1) [microblock] {$\sigma_j$};
    \node (m3) at (4.25 , 1) [fill=none] {$\cdots$};
    \node (m4) at (5.5 , 1) [microblock] {$\sigma_j$};
    \node (k2) at (7 , 1) [keyblockH] {$\mathcal{B}_{j+1}$};
    \draw [line width=0.7pt,->,dashed,bend right=20] (3, 2.1) edge node[anchor=south] {$r$} (k1.north);
    \draw [line width=0.7pt,->,dashed,bend left=20] (4, 2.1) edge node[anchor=south] {$1-r$} (k2.north);
    \draw [thick,->,>=stealth] (m1) -- (k1);
    \draw [thick,->,>=stealth] (m2) -- (m1);
    \draw [thick,->,>=stealth] (m3) -- (m2);
    \draw [thick,->,>=stealth] (m4) -- (m3);
    \draw [thick,->,>=stealth] (k2) -- (m4);
    \draw[decorate,decoration={calligraphic brace,amplitude=2mm},ultra thick] (5.5,0.4)--(0,0.4);
    \node (note1) at (2.75, 0) [rectangle, draw=none,  minimum width=1.1cm, minimum height = 0.5cm] {The miner of the key block $\mathcal{B}_j$ produces};
    \node (note2) at (2.75, -0.5) [rectangle, draw=none,  minimum width=1.1cm, minimum height = 0.5cm] {microblocks utilizing its signature $\sigma_j$.};
    \draw[decorate,decoration={calligraphic brace,amplitude=2mm},ultra thick] (1.5,1.6)--(5.5,1.6);
    \node (note3) at (3.5, 2.1) [rectangle, draw=white, fill=white, minimum width=1.1cm, minimum height = 0.5cm] {transaction fees};
\end{tikzpicture}
\caption{\textbf{Bitcoin-NG fee distribution rule}.}
\label{fig:btcng}
\end{figure}
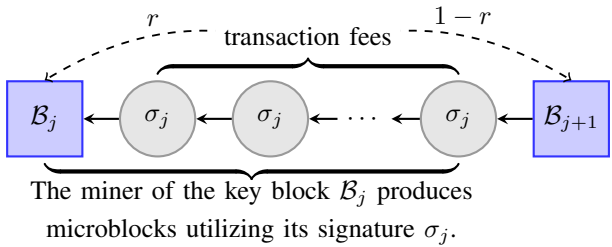

\subsection{Resisting Transaction Inclusion Attack}  \label{subsec:attack1}
Recall from Sec.~\ref{sec:mining strategy} that the selfish miner can withhold a microblock to avoid sharing its transaction fees with the subsequent key-block miner. (We refer readers to the original paper \cite{bitcoinng} for more details.). Note that the probability for the selfish (respectively, honest) miner mines a block is $\alpha$ (respectively, $\beta$ = $1 - \alpha$).
To guarantee the average revenue of the selfish miner launching the above attack is smaller than what it deserves, the distribution ratio $r$ should satisfy
\begin{equation}
    \overset{\text{win 100\%}}{\overbrace{\alpha \times 100\%}} + \overset{\text{Lose 100\%, but mine after txn}}{\overbrace{(1-\alpha ) \times \alpha \times (100\% - r)}} < r,
\end{equation}
therefore $r>1-\frac{1-\alpha}{1+\alpha-\alpha^2}$.
This ratio requirement encourages the selfish miner to place a transaction in a public microblock. 

Later, Yin et al.\cite{yin18bngrrevisit} found that the above computation neglects a case: the incumbent leader can be re-elected as the next leader and gain an extra $\alpha (1 - r)$ fraction of the transaction fee. Thus, the distribution ratio $r$ should satisfy
\begin{equation}
    \overset{\text{win 100\%}}{\overbrace{\alpha \times 100\%}} + \overset{\text{Lose 100\%, but mine after txn}}{\overbrace{(1-\alpha ) \times \alpha \times (100\% - r)}} < r + \alpha (1 - r),
\end{equation}
therefore $r>\frac{\alpha}{1-\alpha}$.
\subsection{Resisting Longest Chain Extension Attack} \label{subsec:attack2}
To increase revenue from some transactions, the selfish miner can ignore these transactions in an honest microblock and mine on a previous microblock. Later on, if the selfish miner mines a key block, it can place these transactions in its own microblock.
To resist this attack, the selfish miner's revenue in this case must be smaller than the revenue obtained by obeying the longest chain extension rule. 
Therefore, we have
\begin{equation} \label{eq:longest}
    \overset{\text{Mine next key block}}{\overbrace{\alpha \times r}} + \overset{\text{Mine the third key Block}}{\overbrace{ \alpha^2 \times (100\% - r)}} < \overset{\text{Mine on microblock}}{\overbrace{\alpha (100\% - r)}},
\end{equation}
which leads to $r < \frac{1-\alpha}{2-\alpha}$. 
Taking the upper bound into consideration, the distribution ratio $r$ satisfies $1-\frac{1-\alpha}{1+\alpha-\alpha^2} < r < \frac{1-\alpha}{2-\alpha}$.
In particular, when $\alpha$ is less than $25\%$, we obtain $37\% <r < 43\%$. Hence, $r = 40\%$ is chosen in the Bitcoin-NG~\cite{bitcoinng}. 

\subsection{Limitations of Existing Analysis} \label{subsec:limitation}
We now describe some limitations of the previous analysis, which can help us to better understand the differences between the above analysis and our analysis in the following section.

Before giving detailed descriptions, let us replay the longest chain extension attack, as shown in Fig.~\ref{fig:limit}. We make two simplifications to better illustrate the analysis limitation: $1$) each leader is allowed to only create one microblock; $2$) each microblock is allowed to only contain one transaction. 
Consider a scenario where an honest miner produces a key block $B_j$ as well as a microblock containing a transaction $tx$. 
If the selfish miner obeys the longest chain extension rule and finds the next key block with probability $\alpha$, it will get a $1-r$ fraction of the transaction fee (which corresponds to the last item in Equation~\eqref{eq:longest}).
However, the selfish miner can directly mine on the key block $B_j$, hoping to win a higher transaction fee of $tx$.
If the selfish miner happens to create the next key block $B_{j+1}$ with probability $\alpha$, it can win $r$ fraction of the transaction fee by including $tx$ in its own microblock (which corresponds to the first item in Equation~\eqref{eq:longest}). If the selfish miner is lucky to mine the next consecutive key block $B_{j+2}$, it will win the remaining $1-r$ of the transaction fee.
Combining all conditions together leads to Equation~\eqref{eq:longest}.

The above simple analysis is quite reasonable if the selfish miner just hopes to get a higher fee from a targeted transaction $tx$. 
However, in reality, the selfish miner usually aims to increase its revenue from all transactions instead of just a targeted one.
From the above discussion, we can see that if the selfish miner applies the strategy to all the transactions rather than a single targeted one, it will quickly use up the space of its future microblocks.
As a result, the selfish miner cannot include another transaction in its microblock, thereby losing the associated transaction fee. 
In other words, the existing analysis ignores the impact of transaction size and microblock capacity, which magnifies the selfish miner's potential revenue from the attack. 
On the other hand, the existing analysis works well for whale transactions with high fees, which are rare so that we don't need to worry about the space.
As we explained before, most of the transactions in current blockchain systems have low fees. Therefore, we need to develop a new analysis for those transactions.

\begin{figure}[!ht]
\centering
\begin{tikzpicture}[x=0.8cm,y=0.8cm]
    \node (k1) at (0 , 1) [keyblockH] {$\mathcal{B}_j$};
    \node (m1) at (1.75 , 1) [microblock] {$tx$};
    \node (k2) at (3.5 , 2.5) [keyblockA_potential] {$\mathcal{B}_{j+1}$};
    \node (m2) at (5.25 , 2.5) [microblock] {$tx$};
    \node (k3) at (7 , 2.5) [keyblockA_potential] {$\mathcal{B}_{j+2}$};
    \draw [line width=0.7pt,->,dashed, bend right=10] (k2) edge (k1);
    \draw [thick,->,>=stealth] (m1) -- (k1);
    \draw [thick,->,dashed,>=stealth] (m2) -- (k2);
    \draw [thick,->,dashed,>=stealth] (k3) -- (m2);
\end{tikzpicture}
\caption{\textbf{A simple example of the longest chain extension attack.} The selfish miner mines directly on block $B_j$ and tries to include transaction $tx$ in its future microblock.}
\label{fig:limit}
\end{figure}
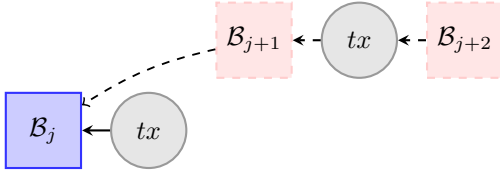

Next, we turn our attention to another limitation of the existing analysis. It assumes that
the selfish miner always adopts honest miners’ key blocks and immediately publishes its new key blocks (i.e., honest mining of key blocks). In other words, it does not consider the impact of key-block selfish mining.
This assumption can only be justified when the selfish miner's computation power is less than the threshold of making key-block selfish mining profitable, because the optimal mining strategy for key blocks is indeed honest mining~\cite{eyal2014majority,bitcoinng,sapirshtein2016optimal}.
However, once the selfish miner's computation power $\alpha$ is above the
threshold, the selfish miner has the incentive to launch the key-block mining attack and so the impact of the key-block selfish mining cannot be ignored anymore. 
This motivates us to study the selfish mining of both microblock and key block in Sec.~\ref{sec:MDP} when $\alpha$ is above the threshold.

\section{Revisiting Incentive Analysis for Microblock} \label{sec:extended}
In this section, we consider the scenario where $\alpha$ is smaller than the threshold of making key-block selfish mining profitable. This allows us to focus only on incentive analysis for microblocks. Recall that there are two types of transactions: whale transactions and regular transactions. 
As explained in Sec.~\ref{subsec:limitation}, whale transactions are so rare that they use little microblock space. For this reason, we can ignore their space requirement (even under the network capacity constraints) and apply the existing analysis.

On the other hand, regular transactions consume most of the microblock space. Hence, we can no longer ignore their space requirement and need to develop a new analysis. To this end, we consider the revenue of transaction fees in terms of regular transactions for the selfish miner and honest miners during a time interval $[0, t]$. 
Without loss of generality, we assume that there exists a block $B_0$ that the selfish miner and honest miners both agree to mine on at the starting time. (For example, $B_0$ can be the genesis block.) {Let $M(t)$ be the number of key blocks mined during the time interval $[0, t]$.} Let $X_i$ ($i \in [0, M(t)]$) denote an indicator random variable which equals one if the $i$-th key block is a selfish key block, as described below
\begin{equation}\nonumber
X_i =\left\{
\begin{aligned}
1, \ &\text{selfish key block}\\
0, \ &\text{honest key block}  .
\end{aligned}
\right.
\end{equation}
Without loss of generality, we assume block $B_0$ is an honest key block. 
For other key blocks, the possibility that it is a selfish key block is equal to $\alpha$. 

After mining a key block, its owner can issue a series of microblocks at a constant rate $v$ until the next key block is mined. Here, the rate $v$ captures the network capacity constraints. Let $Y_i$ denote the interval between the $i$-th key block and $(i+1)$-th key block. Thus, the number of produced microblocks between $i$-th and $(i+1)$-th key blocks is $vY_i$.
In addition, each microblock contains a total fee of $R_t$ because we only consider regular transactions here.
We are now ready to compute the suitable value of $r$ to resist the two microblock attacks for regular transactions.

\subsection{Resisting Transaction Inclusion Attack}
Recall the attack from Sec.~\ref{sec:mining strategy} that the selfish miner hides some of its microblocks generated after a key block but keeps mining on top of the microblock chain. Hence, honest miners directly mine on top of the selfish miner's last published block. Let $\rho$ denote the fraction of the unpublished microblocks among all the selfish microblocks between two consecutive key blocks.
In particular, $\rho=1$ means that the selfish miner hides all the microblocks it has generated between two consecutive key blocks.
Thus, if any two consecutive key blocks satisfy $(X_i, X_{i+1}) = (1,0)$ , there are $ (1 - \rho) v Y_i$ microblocks between them from the view of an honest miner; otherwise, there are $ v Y_i $ microblocks. 

Let $Z_i$ denote an indicator random variable equal to one if $\{ X_i = 1,X_{i+1} = 0 \}$, and equal to zero otherwise. Next, let $Z = \sum_{i=1}^{M(t)-1}Z_i$. Suppose $M(t)=m$.
The following lemma will aid us to bound the value of $Z$ with high probability:

\begin{lemma}\label{lem:keypair}
For $m$ consecutive key blocks, the number of block pairs $(X_i, X_{i+1}) = (1,0)$ has the following Chernoff-type bound: For $0 < \delta < 1$,
\begin{equation}
   \Pr( | Z -  \alpha \beta  (m-1)| > \delta \alpha \beta  (m-1) ) < e^{-\Omega\left(\delta^2 \alpha \beta  m\right)}. 
\end{equation}
\end{lemma}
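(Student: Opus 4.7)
The plan is to exploit the local dependence structure of the sequence $\{Z_i\}$. The $Z_i$'s are not independent because $Z_i$ and $Z_{i+1}$ both involve $X_{i+1}$, so a straight application of Chernoff is not possible. However, $Z_i$ and $Z_j$ are independent whenever $|i-j|\ge 2$, i.e., the sequence is $1$-dependent, and this is exactly the structure that a standard ``odd/even splitting'' trick is designed to handle.

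First, I would verify the mean: since $X_1,\dots,X_m$ are i.i.d.\ Bernoulli$(\alpha)$, we have $\Pr(Z_i=1)=\Pr(X_i=1,X_{i+1}=0)=\alpha\beta$, giving $E[Z]=\alpha\beta(m-1)$. Next I would write
\begin{equation}
Z \;=\; Z^{\mathrm{odd}} + Z^{\mathrm{even}}, \qquad Z^{\mathrm{odd}}=\sum_{i\text{ odd}}Z_i,\quad Z^{\mathrm{even}}=\sum_{i\text{ even}}Z_i.
\end{equation}
Inside $Z^{\mathrm{odd}}$, the terms $Z_1,Z_3,Z_5,\dots$ depend on disjoint pairs $(X_1,X_2),(X_3,X_4),(X_5,X_6),\dots$ and are therefore mutually independent Bernoulli$(\alpha\beta)$ random variables; the same holds for $Z^{\mathrm{even}}$. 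Each subsum has $\Theta(m)$ terms and expectation $\Theta(\alpha\beta m)$.

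Now I would apply the standard multiplicative Chernoff bound to each subsum separately: for any $0<\delta'<1$,
\begin{equation}
\Pr\bigl(|Z^{\mathrm{odd}}-E[Z^{\mathrm{odd}}]|>\delta' E[Z^{\mathrm{odd}}]\bigr)\;\le\;2\exp\!\bigl(-\delta'^{\,2}E[Z^{\mathrm{odd}}]/3\bigr),
\end{equation}
and similarly for $Z^{\mathrm{even}}$. By the triangle inequality together with $E[Z]=E[Z^{\mathrm{odd}}]+E[Z^{\mathrm{even}}]$, the event $|Z-E[Z]|>\delta\alpha\beta(m-1)$ forces at least one of the two deviations $|Z^{\mathrm{odd}}-E[Z^{\mathrm{odd}}]|$, $|Z^{\mathrm{even}}-E[Z^{\mathrm{even}}]|$ to exceed $(\delta/2)\alpha\beta(m-1)$, which is a constant-factor multiple of the corresponding subsum's expectation. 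A union bound over the two subsums then yields a tail of order $e^{-\Omega(\delta^2\alpha\beta m)}$, as claimed.

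The only real obstacle is the dependence between adjacent $Z_i$'s; alternatives such as McDiarmid's inequality would only give an exponent of order $\delta^2\alpha^2\beta^2 m$, which is strictly weaker, so the odd/even decomposition is what makes the bound tight. Beyond that step, the argument is a direct application of the standard multiplicative Chernoff bound and a union bound, with no further subtleties.
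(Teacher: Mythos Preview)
Your proposal is correct and follows essentially the same approach as the paper: both arguments exploit the $1$-dependence of the $Z_i$'s via the odd/even splitting, apply a Chernoff bound to each subsum of independent Bernoulli$(\alpha\beta)$ variables, and combine with a union bound. The paper packages the last two steps into an auxiliary lemma (a Chernoff bound for sums decomposable into $T$ independent subsums), whereas you spell them out directly, but the underlying idea is identical.
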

\begin{proof}
Without loss of generality, we assume that $m$ is odd. 
Let $Z^{\textsc{odd}} = Z_1 + Z_3 + \cdots + Z_{m - 2}$ and 
$Z^{\textsc{even}} = Z_2 + Z_4 + \cdots + Z_{m - 1}$.
Then, $Z = Z^{\textsc{odd}} + Z^{\textsc{even}}$. It is easy to show that 
$E\left( Z^{\textsc{odd}} \right) = E\left( Z^{\textsc{even}} \right) = \alpha \beta (m-1)/2$, since $P\{Z_i = 1\} = 
P\{ X_i = 1\} P\{ X_{i+1} = 0 \} = \alpha \beta$.
Note that $\{ Z_{1}, Z_{3}, \ldots, Z_{m-2} \}$ are independent random variables, because $Z_{i}$ is a function of
$(X_{i},X_{i+1})$. Hence, $Z^{\textsc{odd}}$ is a sum of i.i.d. random variables. So is $Z^{\textsc{even}}$.
By Lemma~\ref{lem:key_step}, we have 
\[\Pr \left( Z < (1 - \delta) \alpha \beta  (m-1) \right) <  e^{-\Omega\left(\delta^2 \alpha \beta  (m-1) \right)} = e^{-\Omega\left(\delta^2 \alpha \beta m\right)}.
\]
Similarly, we have $\Pr \left( Z > (1 + \delta) \alpha \beta  (m-1) \right) < e^{-\Omega\left(\delta^2 \alpha \beta m\right)}$. 
\end{proof} 
This lemma shows that as $m$ increases, the number of key pairs $(X_i, X_{i+1}) = (1, 0)$ is between $(1-\delta)\alpha\beta m$ and $(1+\delta)\alpha\beta m$ with high probability. 

Next, we compute the selfish miner's relative revenue for large $m$.
On the one hand, the total amount of transaction fees for all the miners is given by $\sum_{i = 1}^{m-1}{( v Y_i R_t - \rho v  Z_i Y_i R_t)}$.
To see this, note that there are $\sum_{i = 1}^{m-1}{ v Y_i}$ microblocks produced with associated transaction fees $\sum_{i = 1}^{m-1}{ v Y_i R_t}$.
Note also that once $Z_i = 1$, there are $\rho vY_i$ microblocks not being included in the longest chain due to the transaction inclusion attack. Hence, the associated loss of transaction fees is $\sum_{i = 1}^{m-1}{\rho v  Z_i Y_i R_t}$.
On the other hand, the total transaction fees for the selfish miner is given by $\sum_{i = 1}^{m-1}{( \alpha v Y_i R_t - r \rho v  Z_i Y_i R_t})$.
To see this, note that without any attack, 
the selfish miner can get $\alpha$ fraction of the total transaction fees given by $\sum_{i = 1}^{m-1}{\alpha v Y_i R_t}$. Note also that with the transaction inclusion attack, the selfish miner will lose $r$ fraction of the total loss of transaction fees as the first leader.
Combining the above analysis,  we have the following lemma for large $m$. 
\begin{lemma}
The selfish miner's relative revenue $u$ converges to $\frac{\alpha  - r \alpha \beta \rho}{1  - \alpha \beta \rho}$ with high probability as $m \to \infty$. 
\end{lemma}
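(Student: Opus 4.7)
The plan is to start from the two sums given just before the lemma statement --- the total fee sum $\sum_{i=1}^{m-1}(v Y_i R_t - \rho v Z_i Y_i R_t)$ and the selfish-miner fee sum $\sum_{i=1}^{m-1}(\alpha v Y_i R_t - r\rho v Z_i Y_i R_t)$ --- form their ratio, cancel the common factor $v R_t$, and divide numerator and denominator by $\sum_{i=1}^{m-1} Y_i$ to obtain the compact form
\[
u \;=\; \frac{\alpha - r\rho\, T_m}{1 - \rho\, T_m}, \qquad T_m \;:=\; \frac{\sum_{i=1}^{m-1} Z_i Y_i}{\sum_{i=1}^{m-1} Y_i}.
\]
The map $t \mapsto (\alpha - r\rho t)/(1 - \rho t)$ is continuous at $t = \alpha\beta$ (the denominator $1-\rho\alpha\beta$ is bounded away from $0$, since $\rho\alpha\beta<1$), so it will suffice to show that $T_m \to \alpha\beta$ with high probability.

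First, I would control the denominator $\frac{1}{m-1}\sum_i Y_i$. The $Y_i$ are i.i.d.\ exponentials with mean $1/f$ (inter-arrival times of the key-block Poisson process), so a standard Chernoff/MGF bound for a sum of independent exponentials gives
\[
\Pr\Bigl(\Bigl|\tfrac{1}{m-1}\sum_{i} Y_i - 1/f\Bigr| > \delta/f\Bigr) \;\le\; 2 e^{-\Omega(\delta^2 m)}.
\]

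The main step is concentrating $\frac{1}{m-1}\sum_i Z_i Y_i$ around $\alpha\beta/f$. Two issues arise: the $Z_i$ are not independent across consecutive indices (they share $X_{i+1}$), and the summands $Z_i Y_i$ are unbounded. For the first issue I would reuse the odd/even splitting trick from the proof of Lemma~\ref{lem:keypair}: write $\sum Z_i Y_i = \sum_{i\text{ odd}} Z_i Y_i + \sum_{i\text{ even}} Z_i Y_i$, so that each parity sum is over i.i.d.\ terms (disjoint pairs $(X_i, X_{i+1})$ and disjoint $Y_i$). For the second, I would invoke a Bernstein-type inequality for sub-exponential summands, or equivalently truncate each $Y_i$ at $O(\log m / f)$ and handle the negligible truncation tail by a union bound. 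Because Poisson splitting makes $X_i$ independent of $Y_i$, we have $E[Z_i Y_i] = E[Z_i]\,E[Y_i] = \alpha\beta/f$, and the concentration becomes
\[
\Pr\Bigl(\Bigl|\tfrac{1}{m-1}\sum_{i} Z_i Y_i - \alpha\beta/f\Bigr| > \delta/f\Bigr) \;\le\; 4 e^{-\Omega(\delta^2 m)}.
\]

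Finally, a union bound puts both concentration statements on one high-probability event. On this event $T_m$ lies within an arbitrarily small neighborhood of $\alpha\beta$, and continuity of the map above propagates this into $\bigl|u - (\alpha - r\alpha\beta\rho)/(1-\alpha\beta\rho)\bigr| = O(\delta)$, which is what the lemma claims. The main obstacle is the concentration of $\sum Z_i Y_i$: one must simultaneously handle the short-range dependence among consecutive $Z_i$'s (via the odd/even decoupling, exactly as in Lemma~\ref{lem:keypair}) and the unbounded exponential tails of the $Y_i$ (via Bernstein or truncation). Everything else --- cancellation of $v R_t$, extraction of the common $E[Y_1]=1/f$ factor, and the final continuity step --- is routine.
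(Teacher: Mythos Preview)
Your proposal is correct and follows essentially the same route as the paper: form the ratio of the two sums, normalize, show that $\frac{1}{m-1}\sum_i Y_i \to 1/f$ and $\frac{1}{m-1}\sum_i Z_iY_i \to \alpha\beta/f$, and conclude. The paper's version is terser---it simply invokes the law of large numbers for the first limit and cites Lemma~\ref{lem:keypair} together with the independence of $Y_i$ from $Z_i$ for the second---so your explicit Bernstein/truncation argument and odd/even decoupling for $\sum_i Z_iY_i$ supply rigor that the paper leaves implicit.
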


\begin{proof}
According to the previous analysis, we have 
\begin{equation} \label{eq:reveune}
\begin{split}
    u &= \lim_{m \rightarrow \infty } \frac{\sum_{i = 1}^{m-1}{( \alpha v Y_i R_t - r \rho v  Z_i Y_i R_t})}{\sum_{i = 1}^{m-1}{( v Y_i R_t - \rho v  Z_i Y_i R_t)}} \\
        &= \lim_{m \rightarrow \infty } \frac{\sum_{i = 1}^{m-1}{(\alpha v Y_i R_t - r \rho v  Z_i Y_i R_t)}/(m-1)} {\sum_{i = 1}^{m-1}{( v Y_i R_t - \rho v  Z_i Y_i R_t)} /(m-1)}\\
        &\to \frac{v R_t (\alpha  - r \alpha \beta \rho)/f}{v R_t (1  - \alpha \beta \rho)/f} \\
        &= \frac{ \alpha  - r \alpha \beta \rho}{1  - \alpha \beta \rho},
\end{split}
\end{equation}
where the third step comes from the facts that (1) $Y_i$ follows the exponential distribution with mean $1/f$ and that (2) $Y_i$ is independent of $Z_i$.
More specifically, we have $\lim_{m \rightarrow \infty} \sum_{i = 1}^{m-1}{\frac{Y_i}{m-1}} \to 1/f$ by the law of large numbers 
and $\lim_{m \rightarrow \infty }  \sum_{i = 1}^{m-1}{ \frac{Z_i Y_i}{m-1}} \to \alpha \beta /f$ by Lemma~\ref{lem:keypair}.
\end{proof}
This lemma says that for large $m$, the selfish miner's relative revenue is $\frac{(\alpha  - r \alpha \beta \rho)}{(1  - \alpha \beta \rho)}$. 
{Recall that the key block generation process is a Poisson process with rate $f$, and so $M(t)$ is a Poisson arrival process. Hence, when $t$ tends to infinity, $M(t)/t \to f$ holds with high probability. Therefore,}
with high probability, the maximum relative revenue of the selfish miner during $[0, t]$ is
\begin{equation}
\begin{aligned}
    u &= \max_{0 \leq \rho \leq 1}{\frac{\alpha - r \alpha \beta \rho}{1  - \alpha \beta \rho}} \\
    &= r + \max_{0 \leq \rho \leq 1}{\frac{\alpha - r}{1 - \alpha \beta \rho}}.
\end{aligned}
\end{equation}
If $r \leq \alpha$, the optimal $\rho = 1$ and the corresponding
\[u = r + \frac{\alpha - r}{1 - \alpha \beta}.\]
In this case, $u$ is always larger than $\alpha$ since $1 - \alpha \beta < 1$. This means that the selfish miner can always have a relative revenue greater than its fair share by utilizing this attack. On the other hand, if $r > \alpha$, the optimal $\rho = 0$ and $u = \alpha$. This means that the maximum relative revenue that the selfish miner can obtain is honest mining (i.e., $\rho = 0$). 
Therefore, we should set $r > \alpha$ in order to guarantee the adversary cannot gain more from the transaction inclusion attack.

\subsection{Resisting Longest Chain Extension Attack}
Recall the attack from Sec.~\ref{sec:mining strategy} that the selfish miner can bypass some honest microblocks and mines directly on an old honest block. Similarly, let $\rho$ denote the rejected microblock fraction. In particular, $\rho = 1$ means that the selfish miner rejects all honest microblocks and mines directly on the last honest key block. 
More precisely, if two consecutive key blocks are $(X_i, X_{i+1}) = (0,1)$, there are $ (1 - \rho) v Y_i $ honest microblocks accepted by the longest chain.
Let $K_i$ denote an indicator random variable equal to one if $\{ X_i = 0,X_{i+1} = 1\}$, and equal to zero otherwise. Let $K = \sum_{i=1}^{m-1}K_i$. The following lemma will aid us to bound the expectation of $K$ for $m$ blocks:

\begin{lemma} \label{lem:keypair2}
For the $m$ block sequence, the number of block-pair $(X_i, X_{i+1}) = (0,1)$ has the following Chernoff-type bound: For $0 < \delta < 1$,
\begin{equation}
   \Pr( | K -  \alpha \beta  (m-1)| > \delta \alpha \beta  (m-1) ) <  e^{-\Omega\left(\delta^2 \alpha \beta m \right)}. 
\end{equation}
\end{lemma}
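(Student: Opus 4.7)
The plan is to mimic the proof of Lemma~\ref{lem:keypair} almost line-for-line, exploiting the symmetry between the events $\{X_i=1, X_{i+1}=0\}$ and $\{X_i=0, X_{i+1}=1\}$. Under the mining model, the $X_i$'s are i.i.d.\ Bernoulli with parameter $\alpha$, so $P\{K_i = 1\} = P\{X_i=0\}P\{X_{i+1}=1\} = \beta\alpha = \alpha\beta$, exactly matching the marginal probability of $Z_i$ used in the earlier lemma.

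First, I would split the sum by parity, setting $K^{\textsc{odd}} = K_1 + K_3 + \cdots$ and $K^{\textsc{even}} = K_2 + K_4 + \cdots$, so that $K = K^{\textsc{odd}} + K^{\textsc{even}}$. The crucial point is that $K_i$ is a deterministic function of the pair $(X_i, X_{i+1})$; consecutive $K_i$'s share an $X$ and are therefore dependent, but summands of the same parity involve \emph{disjoint} pairs of $X$'s, hence are mutually independent. Each of $K^{\textsc{odd}}$ and $K^{\textsc{even}}$ is therefore an i.i.d.\ sum of Bernoulli$(\alpha\beta)$ variables with common mean $\alpha\beta(m-1)/2$ (up to a single index depending on the parity of $m$).

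Next, I would invoke Lemma~\ref{lem:key_step}, the same Chernoff-type tail bound already used in the proof of Lemma~\ref{lem:keypair}, on each of the two partial sums separately. This yields, for $0<\delta<1$,
\[
\Pr\!\left(\bigl|K^{\textsc{odd}} - \tfrac{\alpha\beta(m-1)}{2}\bigr| > \tfrac{\delta\alpha\beta(m-1)}{2}\right) < e^{-\Omega(\delta^2\alpha\beta m)},
\]
and the analogous bound for $K^{\textsc{even}}$. A union bound transfers the concentration to $K$ around its mean $\alpha\beta(m-1)$, producing the claimed tail bound $e^{-\Omega(\delta^2\alpha\beta m)}$ (the factor of $2$ lost in splitting is absorbed into the $\Omega(\cdot)$).

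There is essentially no obstacle: the only genuine work in Lemma~\ref{lem:keypair} was decoupling the dependence via the odd/even split, and that decoupling is purely structural in the pair-indexing and has nothing to do with which of $(1,0)$ or $(0,1)$ we track. The mild bookkeeping comes from the parity of $m$ (the two partial sums may have lengths differing by one) and from combining the two one-sided deviations into a single two-sided bound; both are routine and absorbed into constants. In effect, the entire content of the lemma is the observation that $P\{K_i=1\}=P\{Z_i=1\}=\alpha\beta$ together with the same independence structure, so the proof is a relabeling of the previous one.
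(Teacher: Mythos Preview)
Your proposal is correct and is exactly the approach the paper intends: the paper's own proof of Lemma~\ref{lem:keypair2} simply reads ``The proof is similar to Lemma~\ref{lem:keypair}. We omit it here due to space constraints.'' Your odd/even decoupling, the identification $P\{K_i=1\}=\beta\alpha=\alpha\beta$, and the appeal to the same Chernoff-type bound (Lemma~\ref{lem:key_step}) reproduce precisely the argument of Lemma~\ref{lem:keypair} with $Z_i$ relabeled as $K_i$, which is all that is required.
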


\begin{proof}
The proof is similar to Lemma~\ref{lem:keypair}. We omit it here due to space constraints.
\end{proof} 

Next, we compute the selfish miner's relative revenue for large $m$.
On the one hand, the total amount of transaction fees for all the miners is given by $\sum_{i=1}^{m-1}{ (v Y_i R_t - \rho v K_i Y_i R_t)}$.
To see this, recall that there are $\sum_{i = 1}^{m-1}{ v Y_i}$ microblocks produced with associated transaction fees $\sum_{i = 1}^{m-1}{ v Y_i R_t}$.
Once $K_i = 1$, there are $\rho vY_i$ microblocks not being included in the longest chain due to the longest chain extension attack. Hence, the associated loss of transaction fees is $\sum_{i = 1}^{m-1}{\rho v  K_i Y_i R_t}$.
On the other hand, the total transaction fees for the selfish miner is given by $\sum_{i = 1}^{m-1}{( \alpha v Y_i R_t - r \rho v  Z_i Y_i R_t})$.
To see this, recall that without any attack, 
the selfish miner can get $\alpha$ fraction of the total transaction fees given by $\sum_{i = 1}^{m-1}{\alpha v Y_i R_t}$. 
With the longest chain extension attack, the selfish miner will lose $1-r$ fraction of the total loss of transaction fees as the second leader.
Combining the above analysis,  we have the following lemma for larger $m$
\begin{lemma}
The selfish miner's relative revenue $\mu$ converges to $\frac{ \alpha  - (1- r) \alpha \beta \rho}{1  - \alpha \beta \rho}$ with high probability as $m \to \infty$. 
\end{lemma}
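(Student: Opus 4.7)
The plan is to follow the template established in the proof of the previous lemma, with two substitutions: replace the indicator $Z_i$ by $K_i$, and replace the fractional share $r$ by $(1-r)$. The second swap reflects the fact that under the longest chain extension attack the selfish miner plays the \emph{second} leader in the bypassed pair $(X_i,X_{i+1}) = (0,1)$, and therefore forfeits the $(1-r)$ portion of every orphaned microblock's fee rather than the $r$ portion. The rest of the bookkeeping is identical to the transaction-inclusion case.

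First I would set up the ratio. Over the first $m-1$ key-block intervals the total transaction fees collected by all miners are $\sum_{i=1}^{m-1}(v Y_i R_t - \rho v K_i Y_i R_t)$, since $v Y_i$ microblocks are produced in the $i$-th interval and, whenever $K_i=1$, a $\rho$-fraction of these honest microblocks is orphaned. The selfish miner's own take is $\sum_{i=1}^{m-1}(\alpha v Y_i R_t - (1-r)\rho v K_i Y_i R_t)$: its baseline $\alpha$-share of the fair revenue, minus the $(1-r)$-slice of each orphaned fee that it would have earned as the incoming leader. Dividing numerator and denominator by $m-1$ reduces the problem to evaluating $\tfrac{1}{m-1}\sum Y_i$ and $\tfrac{1}{m-1}\sum K_i Y_i$.

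Second I would invoke two concentration statements. The strong law of large numbers gives $\tfrac{1}{m-1}\sum Y_i \to 1/f$, because the $Y_i$ are i.i.d.\ exponential with mean $1/f$. For $\tfrac{1}{m-1}\sum K_i Y_i \to \alpha\beta/f$ I would appeal to Lemma~\ref{lem:keypair2} together with the independence of each $Y_i$ from every $X_j$, which implies $E[K_i Y_i] = \alpha\beta/f$. Plugging both limits into the ratio, the common factor $v R_t/f$ cancels and leaves $\mu = (\alpha - (1-r)\alpha\beta\rho)/(1 - \alpha\beta\rho)$, as claimed. The passage from the discrete ``$m\to\infty$'' statement to the continuous-time ``$t\to\infty$'' statement is automatic, since $M(t)/t \to f$ with high probability for the Poisson key-block process.

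The main obstacle is the dependence structure among the $K_i$: consecutive indicators $K_i$ and $K_{i+1}$ both involve $X_{i+1}$, so neither $\{K_i\}$ nor $\{K_i Y_i\}$ is an i.i.d.\ sequence and a plain SLLN cannot be applied directly to $\sum K_i Y_i$. I would handle this exactly as in Lemma~\ref{lem:keypair} by splitting the sum into odd- and even-indexed subsums, so that within each subsum the pairs $(X_i, X_{i+1})$ involve disjoint indices of the underlying $X$-sequence and are therefore independent; a Chernoff bound of the type already used in Lemma~\ref{lem:keypair2}, combined via a union bound, then yields the required concentration. The remaining algebra is routine and I would not dwell on it.
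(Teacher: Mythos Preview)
Your proposal is correct and follows essentially the same approach as the paper's proof: set up the ratio of the selfish miner's fees to the total fees, normalize by $m-1$, apply the law of large numbers to $\sum Y_i$ and the concentration result of Lemma~\ref{lem:keypair2} (together with independence of $Y_i$ from the $X_j$'s) to $\sum K_i Y_i$, and cancel the common factor $vR_t/f$. Your discussion of the odd/even split for handling the dependence among the $K_i$ is exactly the mechanism behind Lemma~\ref{lem:keypair2}, which the paper simply cites without repeating.
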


\begin{proof}
According to the previous analysis, we have 
\begin{equation}\label{eq:reveune1}
\begin{split}
    u &= \lim_{m \rightarrow \infty } \frac{\sum_{i = 1}^{m-1}{( \alpha v Y_i R_t - (1 - r) \rho v  K_i Y_i R_t})}{\sum_{i = 1}^{m-1}{( v Y_i R_t - \rho v  K_i Y_i R_t)}} \\
        &= \lim_{m \rightarrow \infty } \frac{\sum_{i = 1}^{m-1}{(\alpha v Y_i R_t - (1-r) \rho v  K_i Y_i R_t)}/(m-1)} {\sum_{i = 1}^{m-1}{( v Y_i R_t - \rho v  K_i Y_i R_t)} /(m-1)}\\
        &\to \frac{v R_t (\alpha  - (1-r) \alpha \beta \rho)/f}{v R_t (1  - \alpha \beta \rho)/f} \\
        &= \frac{ \alpha  - (1- r) \alpha \beta \rho}{1  - \alpha \beta \rho},
\end{split}
\end{equation}
where the third step comes from the facts that $Y_i$ follows the exponential distribution with mean $1/f$ and that $Y_i$ is independent of $K_i$.
\end{proof}

This lemma says that for large $m$, the selfish miner's relative revenue is $\frac{ \alpha  - (1-r) \alpha \beta \rho}{1  - \alpha \beta \rho}$. {Similar with the previous analysis, we can show that as $t \to \infty$,}
with high probability, the maximum relative revenue of the selfish miner during $[0, t]$ is
\begin{equation}
\begin{aligned}
    u   &= \max_{0 \leq \rho \leq 1}{\frac{\alpha-(1-r)\alpha \beta \rho}{ 1 - \alpha \beta \rho}} \\
        &= 1 - r + \max_{0 \leq \rho \leq 1}{\frac{r - \beta}{1 - \alpha \beta \rho}}.
\end{aligned}
\end{equation}
If $r \geq \beta$, the optimal $\rho = 1$ and the corresponding
\[u = 1 - r + \frac{r - \beta}{1 - \alpha \beta}.\]
In this case, $u$ is always larger than $\alpha$ since $1-\alpha \beta < 1$. This means that the selfish miner can always have a relative revenue greater than its fair share by launching this attack.
On the other hand, if $r < \beta$, the optimal $\rho = 0$ and $u = \alpha$. This means that the maximum relative revenue that the selfish miner can obtain is honest mining (i.e., $\rho = 0$).
Therefore, we should set $r < \beta$ in order to guarantee the adversary cannot gain more from the longest chain extension attack.
Combining the two incentive sub-mechanisms of transaction inclusion and longest chain extension, the value of $r$ needs to satisfy that \[\alpha < r < \beta.\]

\subsection{Discussion}
We show that the existing analysis in \cite{bitcoinng, yin18bngrrevisit} can be used to bound the split ratio $r$ for whale transactions. Also, we develop new analysis to bound the ratio $r$ for regular transactions under network capacity constraints. These bounds are depicted in Fig.~\ref{fig:ratiocompare}.
As we can see, our bound $\alpha< r < \beta$ (for regular transactions) contains the previous two bounds $1-\frac{1-\alpha}{1+\alpha-\alpha^2}<r<\frac{1-\alpha}{2-\alpha}$
and $\frac{\alpha}{1-\alpha}<r<\frac{1-\alpha}{2-\alpha}$ (for whale transactions). This leads to several interesting implications.

\begin{figure}[!ht]
\centering
\includegraphics[width=1.8in,height=1.6in]{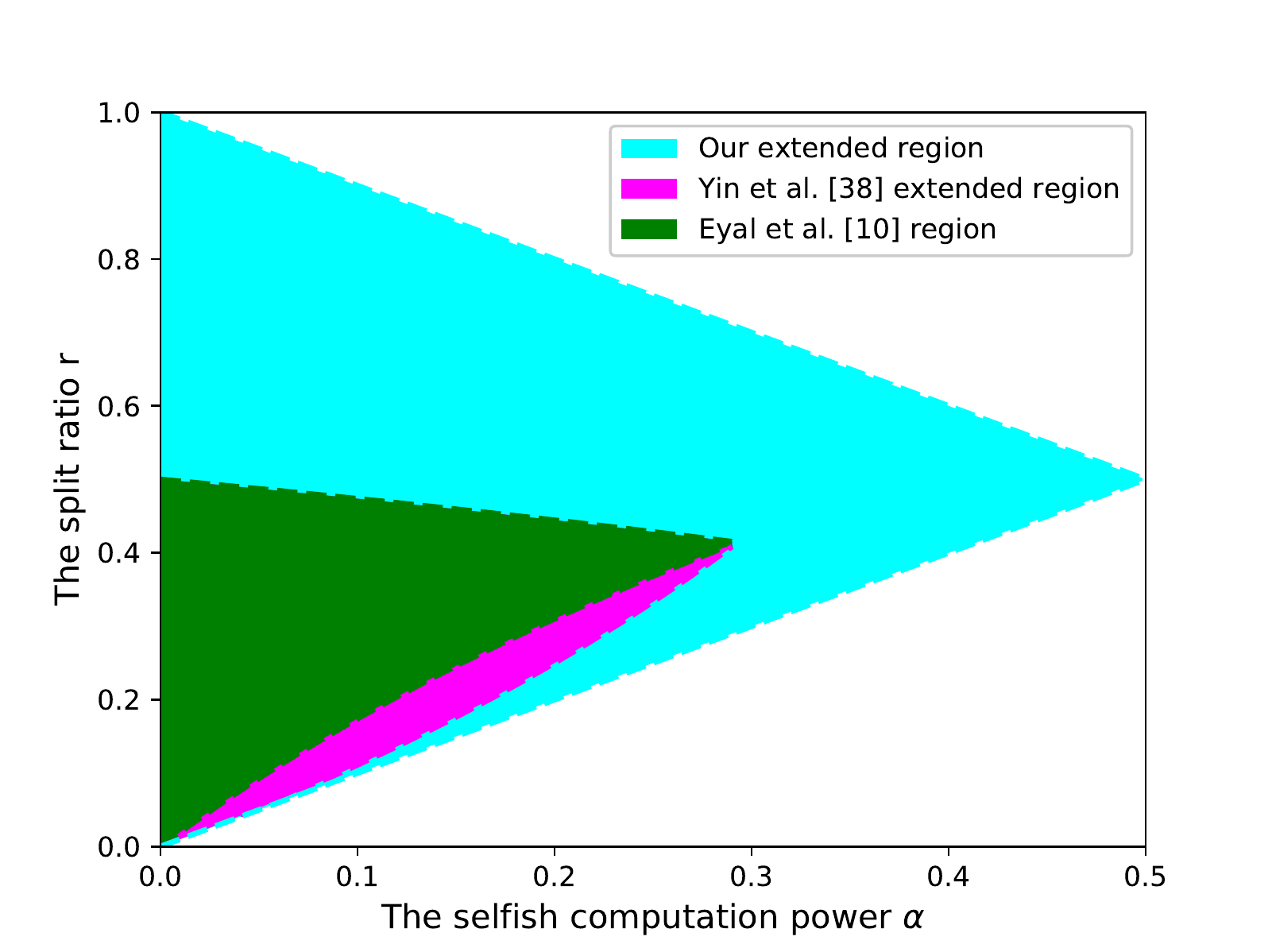}
\caption{\textbf{The comparison of the transaction fee distribution ratio.}}
\vspace{-2mm}
\label{fig:ratiocompare}
\end{figure}

First, introducing network capacity constraints doesn't make it harder to maintain the incentive compatibility of Bitcoin-NG. This is because the bounds for whale transactions are the same as the previous ones and the bound for regular transactions contains the previous ones.

Second, when $\alpha$ is smaller than $29\%$, we can find a value of $r$ that satisfies all the bounds. This means that the incentive compatibility of Bitcoin-NG can be maintained for all types of transactions even under network capacity constraints in this regime.

Third, when $\alpha$ is larger than $29\%$, we cannot find a value of $r$ that satisfies all the bounds, because the two bounds for whale transactions both become invalid. This means that the incentive compatibility of Bitcoin-NG can be maintained only for regular transactions but not for whale transactions in this regime. In other words, the presence of whale transactions might cause instability of the whole system in this regime. As such, some defense mechanisms should be designed accordingly.

\begin{remark}
{As shown in Eq.~\eqref{eq:reveune} and Eq.~\eqref{eq:reveune1}, as the network capacity $v$ increases, the revenues for both the selfish miner and honest miners also increase. In particular, the expected revenues for them are proportional with the network capacity. This is because our analysis assumes that the pending transactions always exist and so the change of the network capacity does not affect the attack strategies. 
More importantly, as both the revenues for the selfish miner and honest miners are increasing proportionally with the network capacity, the final relative revenue doesn't depend on the network capacity $v$.}
\end{remark}

\section{Incentive Analysis for both Microblock and Key Block} \label{sec:MDP}
In this section, we consider the scenario where $\alpha$ is larger than the threshold of making key-block selfish mining profitable. This requires us to study the selfish mining strategies for key blocks and microblocks jointly. To the best of our knowledge, we are among the first to conduct such an analysis, since most previous analysis only focuses on the incentive of microblocks.

\subsection{Markov Decision Process (MDP)}
We apply MDP to model various selfish mining strategies. Although MDP has been used to model the selfish mining strategies in Bitcoin (see, e.g., \cite{sapirshtein2016optimal}), its application to Bitcoin-NG is new and non-trivial due to the presence of the transaction inclusion attack and the longest chain extension attack.  

In order to make our analysis tractable, we introduce two simplifications. First, we assume that the key block interval is $1/f$ and so the number of microblocks produced between two consecutive key blocks is $v/f$. (Note that the key block interval is assumed to follow the exponential distribution with mean $1/f$ in Sec.~\ref{sec:extended}.) Second, we only consider binary choices: publishing or hiding all selfish microblocks in the transaction inclusion attack, and accepting or rejecting all honest microblocks in the longest chain extension attack. (This is consistent with the fact that $\rho = 0$ or $1$ in Sec.~\ref{sec:extended}.)

We are now ready to describe our MDP model, which can be presented by a $4$-tuple $\mathcal{M}:=(S,A,P,R)$, where $S$ is the state space, $A$ is the action space, $P$ is the stochastic state transition matrix, and $R$ is the reward matrix. Specifically, $S$ contains all possible states in the selfish mining process; $A$ includes the available actions (e.g., publishing or hiding blocks by the selfish miner) at each state;  $P$ contains the transition probabilities from the current state to the next state according to the taken action; $R$ records how much the selfish miner obtains when there are some state transitions.
Table~\ref{tab:state} illustrates the MDP of selfish mining in Bitcoin-NG. 
Note that blocks are assumed to be transmitted without delay (see Sec.~\ref{sec:model}), and so forks are not considered in the analysis.
Below we will discuss each component of the $4$-tuple:

\noindent \textbf{Actions.} The selfish miner has eight available actions in our model.  
\begin{itemize}[leftmargin=*]
    \item \textbf{Adopt and include.} The selfish miner accepts all honest key blocks and the corresponding honest microblocks. In other words, the selfish miner will mine its key block on the last honest block and abandon its private chain. This action is referred to as $\mathsf{adopt}$. 
    
    \item \textbf{Adopt and exclude.} The selfish miner accepts all honest key blocks and microblocks except for microblocks produced after the last honest key block. Specifically, the selfish miner directly mines on top of the last honest key block, which is referred to as $\mathsf{adoptE}$.
    
    \item \textbf{Override and publish.} The selfish miner publishes all its key blocks and corresponding microblocks whenever its private chain is longer than the honest one. The chain length is counted by the key block. This action is denoted as $\mathsf{override}$. 
    
    \item \textbf{Override and hide.} The selfish miner publishes all its key blocks and the microblocks except for these mined after the last selfish key block whenever its private chain is longer than the honest one. 
    This action is denoted as $\mathsf{overrideH}$. 
    
    \item \textbf{Match and publish.} When an honest miner finds one new key block, the selfish miner publishes its key block of the same height and the microblocks built after this key block. This action is available when the selfish miner has one block in advance and is referred to as $\mathsf{match}$.
    
    \item \textbf{Match and hide.} When an honest miner generates a new key block, the selfish miner publishes its key block of the same height while hides the microblocks built after this key block. This action is also available when the selfish miner has one block in advance. This action is denoted as $\mathsf{matchH}$.
    
    \item \textbf{Wait.} In this action, the selfish miner does not publish any new key blocks and microblocks, while keeps mining on its private chain until a new key block and corresponding microblocks are found.
    
    \item \textbf{Revert.} The selfish miner reverts its previous actions. Specifically, the selfish miner can publish its hidden microblocks when there is no honest key block mined after its block; the selfish miner can include the honest microblocks (decided to excluded in the previous decision) or excludes the honest microblocks (decided to included in the previous decision) once there is no selfish key block mined on an honest block.
\end{itemize}
{The adopt, override match, and wait actions include all possible actions on the selfish mining of key blocks, while hide, publish and revert actions cover all possible actions on the transaction inclusion and longest chain extension attacks of microblocks.}
Note that in match action, the selfish miner publishes its key block of the same height to match honest miners' key block. Therefore, there are two forking branches of the same length.
In Bitcoin-NG, honest miners adopt a uniform tie breaking rule to choose which branch to mine on. 
See Sec.~\ref{subsec:bitcoinng} for more details about the fork choice rule in Bitcoin-NG.
In particular, we introduce a variable $\gamma$, which denotes the fraction of honest miners that mine on the selfish miner's branch. 
\begin{table*}[!ht]
    \caption{State transition and reward matrices for the optimal selfish mining.}
    \label{tab:state}
    \centering
    \begin{tabular}{c|c|c|c|c}
    \hline
    \textbf{State $\times$ Action} & \textbf{State} & \textbf{Probability} & \textbf{Reward} & \textbf{Condition}\\
    \hline
    $(l_a,l_h,\cdot, S_h),$ $\mathsf{adopt}$ & \multirow{3}*{\shortstack{$(1, 0,\mathsf{noTie}, H_\mathsf{in})$ \\ $(0, 1,\mathsf{noTie}, H_\mathsf{in})$ }}  & \multirow{3}*{\shortstack{$\alpha$ \vspace{0.15cm} \\ $1 - \alpha$ }} & $(l_h, l_h, 0, 0)$ & \multirow{3}*{$-$} \\ \cline{1-1} \cline{4-4}
    $(l_a,l_h,\cdot, S_p),$ $\mathsf{adopt}$ & & & $(l_h, l_h-1+(1-r), 0, r)$ & \\ \cline{1-1} \cline{4-4}
    $(l_a,l_h,\cdot, \left\{H_\mathsf{in}, H_\mathsf{ex}\right\}),$ $\mathsf{adopt}$ & & & $(l_h, l_h-1, 0, 0)$ &\\ \cline{1-1} \cline{4-4}
    
    \hline
    $(l_a,l_h,\cdot, S_h) ,$ $\mathsf{adoptE}$ & \multirow{3}*{\shortstack{$(1, 0,\mathsf{noTie}, H_\mathsf{ex})$  \\ $(0, 1,\mathsf{noTie}, H_\mathsf{ex})$ }}  & \multirow{3}*{\shortstack{$\alpha$ \vspace{0.15cm}  \\ $1 - \alpha$ }} & $(l_h, l_h, 0, 0)$ & \multirow{3}*{$-$} \\ \cline{1-1} \cline{4-4}
    $(l_a,l_h,\cdot, S_p) ,$ $\mathsf{adoptE}$ & & & $(l_h, l_h-1+(1-r), 0, r)$  & \\ \cline{1-1} \cline{4-4}
    $(l_a,l_h,\cdot, \left\{H_\mathsf{in}, H_\mathsf{ex}\right\}),$ $\mathsf{adoptE}$ & & & $(l_h, l_h-1, 0, 0)$ & \\ \cline{1-1} \cline{4-4}
    
    \hline
    $(l_a,l_h,\cdot, H_\mathsf{ex}),$ $\mathsf{override}$  & \multirow{3}*{\shortstack{$(l_a-l_h, 0,\mathsf{noTie}, S_p)$  \\ $(l_a-l_h-1, 1,\mathsf{noTie}, S_p)$ }}  & \multirow{3}*{\shortstack{$\alpha$ \vspace{0.15cm} \\ $1 - \alpha$ }} & $(0, 0, l_h+1, l_h+1)$ & \multirow{3}*{$l_a > l_h$} \\ \cline{1-1} \cline{4-4}
    $(l_a,l_h,\cdot, H_\mathsf{in}),$ $\mathsf{override}$ & & & $(0, r, l_h+1, l_h+(1-r))$  & \\ \cline{1-1} \cline{4-4}
    $(l_a,l_h,\cdot, \left\{S_p, S_h\right\}),$ $\mathsf{override}$ & & & $(0, 0, l_h+1, l_h)$ & \\ \cline{1-1} \cline{4-4}
    
    \hline
    $(l_a,l_h,\cdot, H_\mathsf{ex}),$ $\mathsf{overrideH}$ & \multirow{3}*{\shortstack{$(l_a-l_h, 0,\mathsf{noTie}, S_h)$  \\ $(l_a-l_h-1, 1,\mathsf{noTie}, S_h)$ }}  & \multirow{3}*{\shortstack{$\alpha$ \vspace{0.15cm} \\ $1 - \alpha$ }} & $(0, 0, l_h+1, l_h+1)$ & \multirow{3}*{$l_a > l_h$} \\ \cline{1-1} \cline{4-4}
    $(l_a,l_h,\cdot, H_\mathsf{in}),$ $\mathsf{overrideH}$ & & & $(0, r, l_h+1, l_h+(1-r))$  & \\ \cline{1-1} \cline{4-4}
    $(l_a,l_h,\cdot, \left\{S_p, S_h\right\}),$ $\mathsf{overrideH}$ & & & $(0, 0, l_h+1, l_h)$ & \\ \cline{1-1} \cline{4-4}
    
    \hline
    \multirow{2}{*}{$(l_a,l_h,\mathsf{noTie}, \cdot),$ $\mathsf{wait}$}    & $(l_a+1, l_h,\mathsf{noTie}, *)$ & $\alpha$  & \multirow{2}{*}{$(0, 0, 0, 0)$} & \multirow{2}*{$-$}\\
                                                                                & $(l_a, l_h+1,\mathsf{noTie}, *)$ & $1 - \alpha$ &  \\  
    \hline
    \multirow{3}*{\shortstack{$(l_a,l_h,noTie, H_\mathsf{in}),$ $\mathsf{match}$ \\ $(l_a,l_h,\mathsf{tie}, H_\mathsf{in}),$ $\mathsf{wait}$}}  
                                                         & $(l_a+1, l_h, \mathsf{tie}, H_\mathsf{in})$ & $\alpha$  & $(0, 0, 0, 0)$ & \multirow{3}*{$la \geq l_h$}\\
                                                         & $(l_a-l_h, 1,\mathsf{noTie}, S_p)$ & $\gamma (1- \alpha)$  & $(0, r, l_h, l_h-1+(1-r))$ &\\
                                                         & $(l_a, l_h+1,\mathsf{noTie}, H_\mathsf{in})$ & $(1-\gamma)(1- \alpha)$  & $(0, 0, 0, 0)$ &\\
    \hline
    \multirow{3}*{\shortstack{$(l_a,l_h,noTie, H_\mathsf{ex}),$ $\mathsf{match}$ \\ $(l_a,l_h,\mathsf{tie}, H_\mathsf{ex}),$ $\mathsf{wait}$}} & $(l_a+1, l_h, \mathsf{tie}, H_\mathsf{ex})$ & $\alpha$  & $(0, 0, 0, 0)$ & \multirow{3}*{$la \geq l_h$}\\
                                                         & $(l_a-l_h, 1,\mathsf{noTie}, S_p)$ & $\gamma (1- \alpha)$  & $(0, 0, l_h, l_h-1)$ & \\
                                                         & $(l_a, l_h+1,\mathsf{noTie}, H_\mathsf{ex})$ & $(1-\gamma)(1- \alpha)$  & $(0, 0, 0, 0)$ & \\
    \hline
    \multirow{3}{*}{\shortstack{$(l_a,l_h,\mathsf{noTie}, \left\{S_p, S_h\right\}),$ $\mathsf{match}$ \\ $(l_a,l_h,\mathsf{tie}, \left\{S_p, S_h\right\}),$ $\mathsf{wait}$}}   
                                                         & $(l_a+1, l_h, \mathsf{tie}, *)$ & $\alpha$  & $(0, 0, 0, 0)$ & \multirow{3}*{$la \geq l_h$} \\
                                                         & $(l_a-l_h, 1,\mathsf{noTie}, S_p)$ & $\gamma (1- \alpha)$  & $(0, 0, l_h, l_h)$ & \\
                                                         & $(l_a, l_h+1,\mathsf{noTie}, *)$ & $(1-\gamma)(1- \alpha)$  & $(0, 0, 0, 0)$ & \\
    \hline
    \multirow{3}{*}{\shortstack{$(l_a,l_h,\mathsf{noTie}, H_\mathsf{in}),$ $\mathsf{matchH}$ \\ $(l_a,l_h, tie^{\prime}, H_\mathsf{in}),$ $\mathsf{wait}$}}    
                                                         & $(l_a+1, l_h, tie^{\prime}, H_\mathsf{in})$ & $\alpha$  & $(0, 0, 0, 0)$ & \multirow{3}*{$la \geq l_h$}\\
                                                         & $(l_a-l_h, 1,\mathsf{noTie}, S_h)$ & $\gamma (1- \alpha)$  & $(0, r, l_h, l_h-1+(1-r))$ &\\
                                                         & $(l_a, l_h+1,\mathsf{noTie}, H_\mathsf{in})$ & $(1-\gamma)(1- \alpha)$  & $(0, 0, 0, 0)$ &\\
    \hline
    \multirow{3}{*}{\shortstack{$(l_a,l_h,noTie, H_\mathsf{ex}),$ $\mathsf{matchH}$ \\ $(l_a,l_h,\mathsf{tie}^{\prime}, H_\mathsf{ex}),$ $\mathsf{wait}$}}    
                                                         & $(l_a+1, l_h, \mathsf{tie}^{\prime}, H_\mathsf{ex})$ & $\alpha$  & $(0, 0, 0, 0)$ & \multirow{3}*{$l_a \geq l_h$}\\
                                                         & $(l_a-l_h, 1,\mathsf{noTie}, S_h)$ & $\gamma (1- \alpha)$  & $(0, 0, l_h, l_h-1)$ & \\
                                                         & $(l_a, l_h+1,\mathsf{noTie}, H_\mathsf{ex})$ & $(1-\gamma)(1- \alpha)$  & $(0, 0, 0, 0)$ &\\
    \hline
    \multirow{3}{*}{\shortstack{$(l_a,l_h,noTie,\left\{S_p, S_h\right\}),$ $\mathsf{matchH}$ \\ $(l_a,l_h,\mathsf{tie}^{\prime}, \left\{S_p, S_h\right\}),$ $\mathsf{wait}$}}                                                       & $(l_a+1, l_h, \mathsf{tie}^{\prime}, *)$ & $\alpha$  & $(0, 0, 0, 0)$ & \multirow{3}*{$l_a \geq l_h$}\\
                                                         & $(l_a-l_h, 1,\mathsf{noTie}, S_h)$ & $\gamma (1- \alpha)$  & $(0, 0, l_h, l_h)$ & \\
                                                         & $(l_a, l_h+1,\mathsf{noTie}, *)$ & $(1-\gamma)(1- \alpha)$  & $(0, 0, 0, 0)$ &\\    
    \hline
    $(l_a,l_h, \mathsf{tie}^{\prime}, \cdot),$ $\mathsf{revert}$ & $(l_a, l_h, \mathsf{tie}, *)$ & $1$  & $(0, 0, 0, 0)$ & $-$\\
    \hline
    $(l_a,l_h, \cdot, S_h),$ $\mathsf{revert}$ & $(l_a, l_h, *, S_p)$ & $1$  & $(0, 0, 0, 0)$ & $l_h = 0$\\                                                 
    \hline
    $(l_a,l_h, \cdot, H_\mathsf{ex}),$ $\mathsf{revert}$ & $(l_a, l_h, *, H_\mathsf{in})$ & $1$  & $(0, 0, 0, 0)$ & $l_a = 0$\\                          
    
    \hline
\end{tabular}
    \begin{tablenotes}
    \item $*$ denotes the state element remains the same in the state transition.
    \end{tablenotes}
\end{table*}

\noindent \textbf{State space.}  The state space $S$ is also composed by $4$-tuple $(l_a,l_h,\mathsf{fork}, \mathsf{lastMicroBlock})$. 
\begin{itemize}[leftmargin=*]
    \item $\bm{l_a}$ accounts for the length of the chain mined by the selfish miner after the last common ancestor key block.
    More precisely, the last common ancestor key block is the last key block in the longest chain accepted by both the selfish miner and \emph{all} honest miners, and is updated once the selfish miner adopts the public chain or all honest miners adopt the selfish miner's chain.
    In addition, the chain length is counted by the selfish key blocks in this branch. 
    
    \item $\bm{l_h}$ is the length of the public chain after the last common ancestor key block. This chain can be viewed by both the selfish miner and honest miners.

    \item $\bm{\mathsf{fork}}$. The field fork obtains three possible values, dubbed $\mathsf{noTie}$, $\mathsf{tie}$ and $\mathsf{tie}^{\prime}$. Specifically, $\mathsf{tie}$ means the selfish miner publishes $l_h$ selfish key block and the corresponding microblocks; $\mathsf{tie}^{\prime}$ presents the selfish miner publishes $l_h$ selfish key block and the corresponding microblocks except for these after the last selfish key block; $\mathsf{noTie}$ signifies that there are not two public branches with the equivalent length. 
    
    \item $\bm{\mathsf{lastMicroBlock}}$. This field also includes four possible values, dubbed $H_\mathsf{in}$, $H_\mathsf{ex}$, $S_p$, and $S_h$. Specifically, $H_\mathsf{in}$ (respectively, $H_\mathsf{ex}$) represents the common ancestor is an honest key block, and the corresponding microblocks are accepted (respectively, rejected) by the selfish miner. While $S_p$ (or $S_h$) which stands for the common ancestor is a selfish key block, and the corresponding microblocks mined are published (or hidden, respectively) by the selfish miner.
\end{itemize}

\noindent \textbf{State Transition and Reward.} We use a $4$-tuple $(R_h,T_h ,R_a$,$T_a)$ to indicate the rewards won by the selfish and honest miners in the state transitions. Specifically, $R_h$ (respectively, $R_a$) is the key block rewards for honest (respectively, the selfish) miners , while $T_h$ (respectively, $T_a$) is the transaction fee for honest (respectively, the selfish) miners. 

Recall that there are two types of transactions. Here, we focus on regular transactions and will discuss whale transactions later. Recall also that the microblock fee of a regular microblock is denoted by $R_t$. 
For convenience, instead of recording the number of rewards, each field only records the number of key block reward or transaction fees (the total transaction fee in $v/f$ microblocks as one unit) won by miners. 
More importantly, the transaction fees included in the microblocks after the common ancestor key block are not assigned to miners until the next ancestor key block is decided. This is because these transaction fees are affected by some future actions of the selfish miner (see Sec.~\ref{sec:extended}). 

In $\mathsf{adopt}$ or $\mathsf{adoptE}$ actions, the selfish miner accepts $l_h$ honest key blocks and the microblocks mined before these key blocks. Honest miners obtain $l_h R_b$ key block rewards and $(l_h - 1) v/f R_t$ transaction fees. In $\mathsf{override}$ or $\mathsf{overrideH}$ actions, the selfish miner publishes $l_h + 1$ selfish key blocks. Honest miners accept these key blocks and the microblocks produced before the key blocks. Thus, the selfish miner obtains $(l_h+1)R_b$ key block rewards and $l_h v/f R_t$ transaction fees. In the $\mathsf{match}$ actions, the next state depends on whether the next key block is created by the selfish miner (w.p. $\alpha$), by some honest miners working on the honest branch (w.p. $(1-\gamma)(1-\alpha)$), or by the left honest miners mining on the selfish branch (w.p. $ \gamma(1-\alpha)$). In the latter case, the selfish miner effectively overrides the honest miners' branch. It can obtain $l_h R_b$ key block reward and $(l_h-1) v/f R_t$ transaction fees. Note that the value of $\gamma$ is decided by the adopted fork solution (e.g., $\gamma=0.5$ in the uniform tie-break policy).

Once the common ancestor key block changed, the transaction fees in the microblocks produced after the previous ancestor key block will be assigned. There are two cases.
\begin{itemize}[leftmargin=*]
    \item The previous common ancestor key block is mined by an honest miner. This case can be further divided into two subcases: $1$) the next key block is mined by honest miners, and honest miners get $v/f R_t$ transaction fees; $2$) the next key block is mined by the selfish miner and $\bm{\mathsf{lastMicroBlock}} = H_\mathsf{in}$, honest miners get $r v/f R_t$ transaction fees and the selfish miner gets $(1-r) v/f R_t$ transaction fees. 
    
    \item The previous common ancestor key block is mined by the selfish miner. This case can be further divided into two subcases: $1$) the next key block is mined by the selfish miner, and the selfish miner gets $v/f R_t$ transaction fees; $2$) the next key block is mined by some honest miners and $\bm{\mathsf{lastMicroBlock}} = S_\mathsf{p}$, the selfish miner gets $r v/f R_t$ transaction fees and honest miners get $(1-r) v/f R_t$ transaction fees. 
\end{itemize}

Finally, we turn our attention from regular transactions to whale transactions. Since whale transactions are rare and unpredictable, we  model the microblock fee as a random variable taking two values: $R_t$ or $R_t$ plus the fee of a whale transaction. Let $\bar{R}_t$ be the expected microblock fee. Clearly, $\bar{R}_t > R_t$. The long-term effect of whale transactions is to decrease the ratio $k$ from $R_b/R_t$ to $R_b/\bar{R}_t$. As we will see in our evaluation shortly, such an effect slightly increases the relative revenue of the selfish miner.

\subsection{Evaluation Results} \label{sec:evaluation}
We use the MDP toolbox developed in MATLAB~\cite{chades2014mdptoolbox} to obtain the selfish miner's relative revenue, denoted in the equation \eqref{eq:revenue}.\footnote{{With the MDP toolbox, we can numerically obtain the optimal policies under each scenarios. We do not provide them due to space constraints.}} Note that in the following evaluations, Bitcoin-NG adopts the uniform tie-breaking policy ($\gamma = 0.5$). 

\noindent \textbf{The selfish mining threshold.} Fig.~\ref{fig:alpha} shows the relative revenues of the selfish miner when $r=0.4$ (used in Bitcoin-NG~\cite{bitcoinng}) and $\alpha\in[0,0.45]$.
We consider three reward settings: $k \rightarrow 0$, $k = v/f$, and $k \rightarrow \infty$. Specifically, in the first setting, the transaction fees dominate the miners' revenue; in the second setting, the transaction fees included in $v/f$ microblocks between two consecutive key blocks have the same weight with one key block reward; in the third setting, the key block rewards dominate miners' revenue. Note that the key block reward dominated case has a similar reward distribution as Bitcoin, i.e., the microblock architecture does not impact the system. 

\begin{figure}[!ht]
\centering
\setlength{\abovecaptionskip}{10pt}   
\setlength{\belowcaptionskip}{10pt}   
\includegraphics[width=1.8in,height=1.5in]{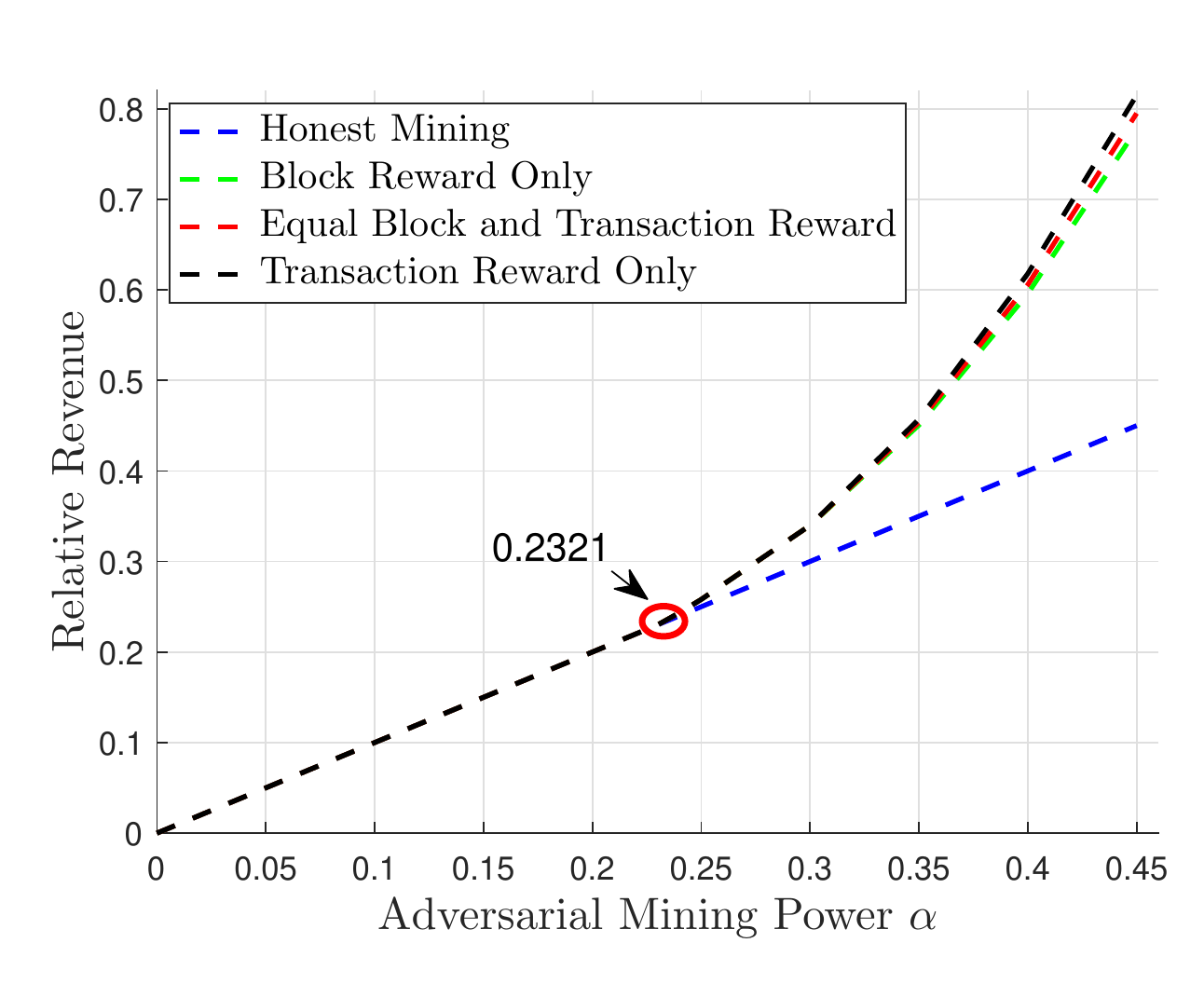}
\caption{The selfish miner's relative revenue.}
\vspace{-2mm}
\label{fig:alpha}
\end{figure}

The figure shows that the thresholds of making selfish mining profitable in these three settings are all $23.21\%$, which is the same as the selfish mining threshold in Bitcoin.
In other words, by adopting the suitable $r$ (i.e., $\alpha < r < 1 - \alpha$), the microblock architecture in Bitcoin-NG does not affect the system security compared with Bitcoin. 
In addition, the selfish miner's revenues in the three settings are still the same even when $\alpha>29\%$, which verifies our analysis in Sec.~\ref{sec:extended} and supports that Bitcoin-NG is as resilient as Bitcoin. 

When $\alpha>35\%$, the differences between the selfish miner's revenues in the three settings and the honest revenue are exhibited in Fig.~\ref{fig:alpha1}. It's easy to see that the selfish miner can obtain the highest revenue in the transaction fee dominated case. This implies that the microblock architecture can slightly increase the selfish miner's revenue, but the increase is much smaller than the results~\cite{ziyu2019}. We will explain the reason shortly. 

\begin{figure}[!ht]
\centering
\setlength{\abovecaptionskip}{10pt}   
\setlength{\belowcaptionskip}{10pt}   
\includegraphics[width=2in,height=1.5in]{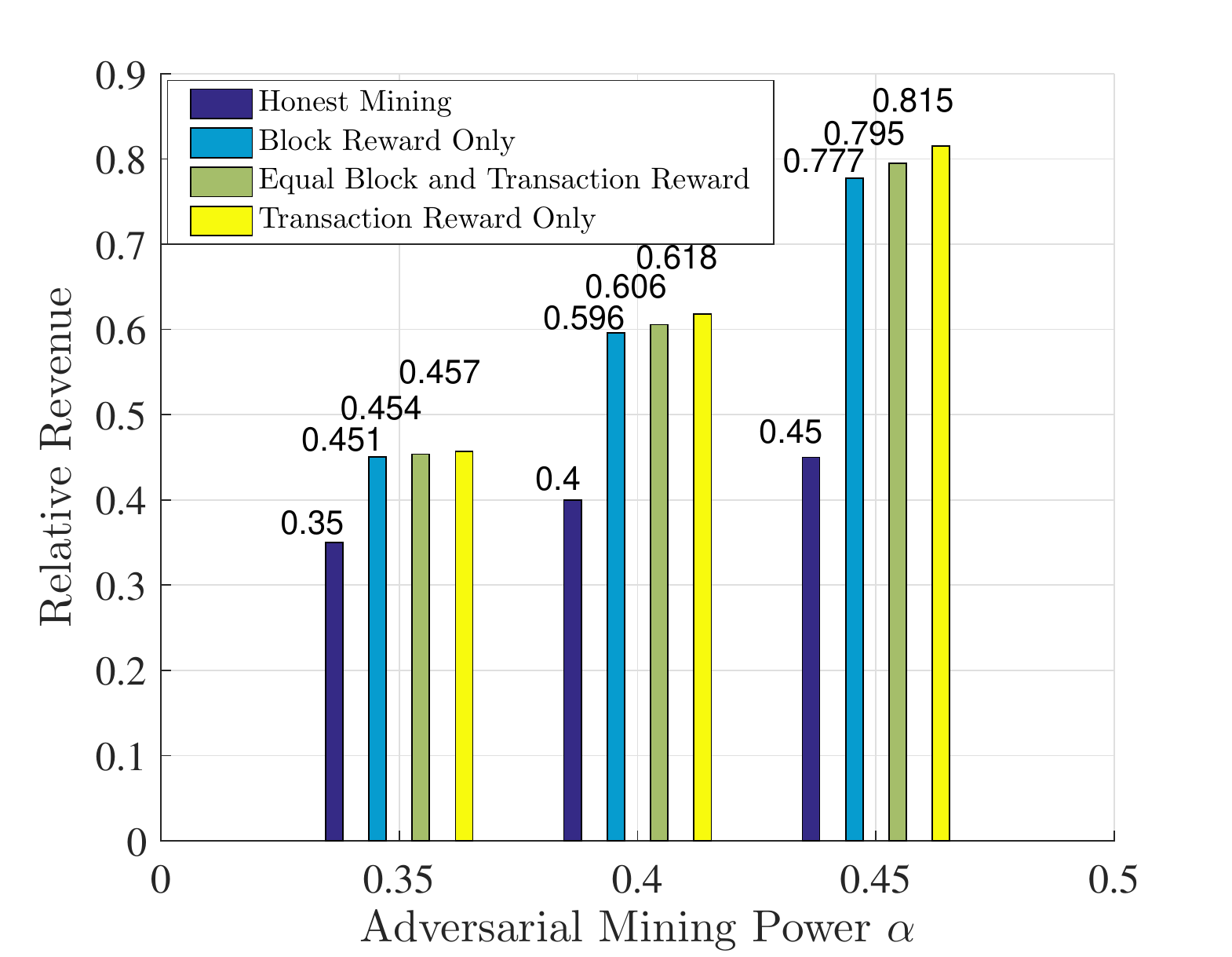}
\caption{The relative revenue when $\alpha \in \{0.35, 0.4, 0.45\}$.}
\vspace{-2mm}
\label{fig:alpha1}
\end{figure}

\noindent \textbf{The selfish revenue with different split ratio.} Fig.~\ref{fig:ratio1} shows the selfish miner's relative revenue when $\alpha = 23.21\%$ and $r$ is set to different values. In Fig.~\ref{fig:ratio1}, we consider two settings: $k \rightarrow 0$ and $k = v/f$ (which are introduced above). We can see when $r$ ranges from $0.2321$ to $0.7679$, the selfish revenues in the two settings are both the lowest. 
Therefore, the results support our findings in Sec.~\ref{sec:extended}. This is when adopting the suitable $r$ (i.e., $\alpha < r < 1-\alpha$), the selfish miner cannot gain more from microblock selfish mining.  

\begin{figure}[!ht]
\centering
\setlength{\abovecaptionskip}{10pt}   
\setlength{\belowcaptionskip}{10pt}   
\includegraphics[width=1.8in,height=1.3in]{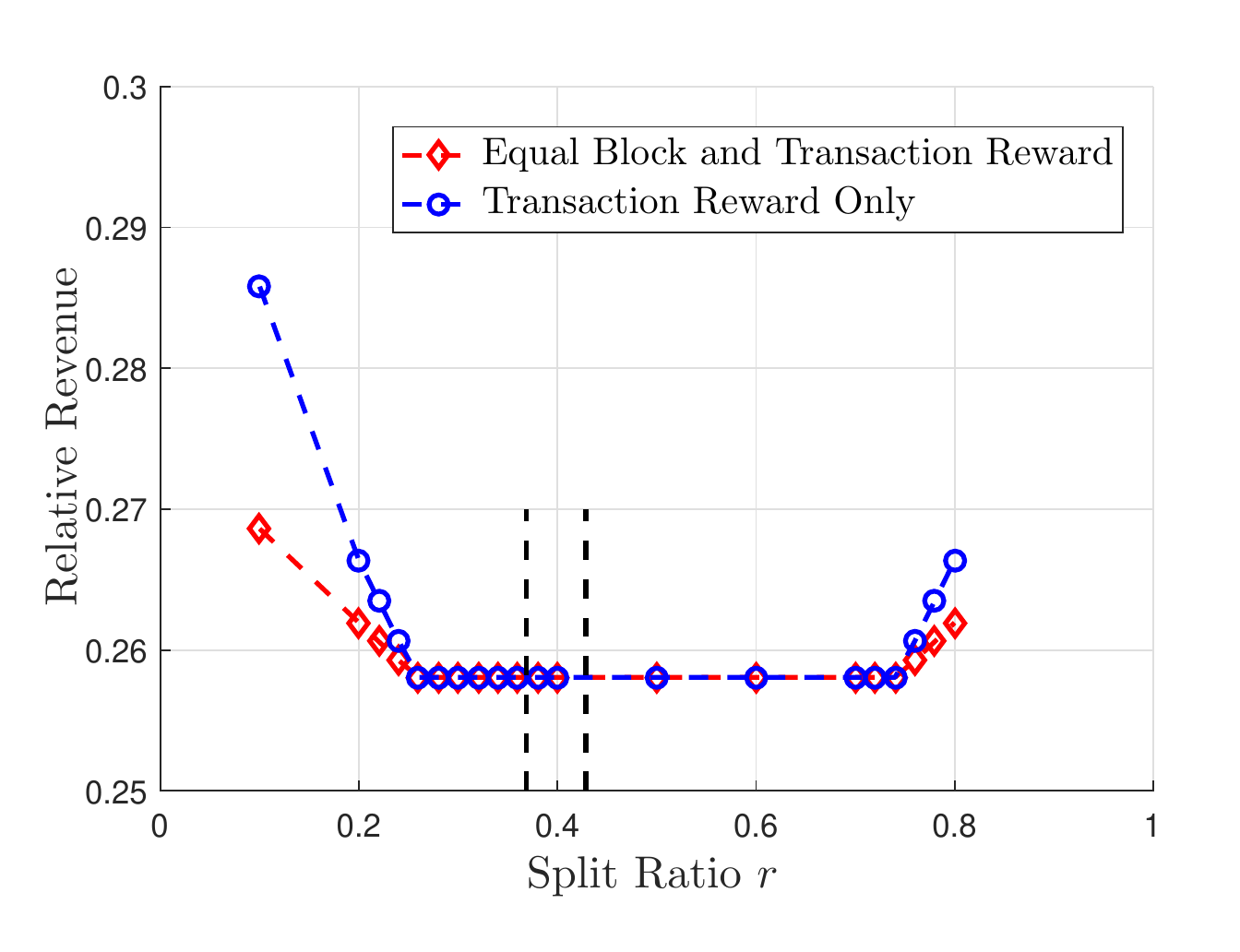}
\caption{The selfish miner's relative revenue with different split ratio $r$.}
\vspace{-2mm}
\label{fig:ratio1}
\end{figure}

\subsection{Discussion} \label{sec:comp2}
First, we validate that when adopting the suitable $r$, the selfish mining threshold in Bitcoin-NG is the same as the threshold in Bitcoin~\cite{sapirshtein2016optimal}. This means that Bitcoin-NG and Bitcoin have the same security level. In other words, our result supports the security claim of Bitcoin-NG~\cite{bitcoinng} by using MDP. 

Second, we find out that when $\alpha>35\%$, the selfish miner in Bitcoin-NG can gain more revenue than in Bitcoin. However, the increase is much less significant as shown in the previous result~\cite{ziyu2019}. This is because Wang et al.~\cite{ziyu2019} treated the selfish mining of key blocks and microblocks as two independent issues and take the outputs of key-block mining attack (i.e., the fraction of selfish key blocks) as input to the microblock mining analysis. 
Their method amplifies the selfish miner's revenue by serializing the key block and microblocks mining. By contrast, our MDP approach integrates the selfish mining of key blocks and microblocks in a more accurate way, leading to theoretical results closer to reality.

\section{Related Work} \label{sec:relatedwork}
In this section, we introduce the prior works of incentive analysis in Bitcoin and Bitcoin-NG. 

\noindent \textbf{Bitcoin Incentive Analysis.} Eyal and Sirer~\cite{eyal2014majority} show that the Bitcoin mining protocol is not incentive competitive. They also introduce a deviant strategy named selfish mining, which wastes honest power and decreases the security threshold to $0.25$. Nayak et al.~\cite{nayak2016stubborn} conclude the selfish mining strategy and extend it to the stubborn mining strategies, which also combines an Eclipse attack.
Moreover, Sapirshtein et al.~\cite{sapirshtein2016optimal} and Gervais et al.~\cite{gervais2016security} try to figure out the optimal Bitcoin selfish mining threshold utilizing the MDP tool. Carlsten et al.~\cite{instability} focus more on the deviating strategy in the transaction-fee regime where the block reward dwindles to a negligible amount. Their undercutting attack works even an attacker only accounts for small computation power and a poor network connection. After confirming the postulate of Carlsten et al.~\cite{instability}, Tsabary and Eyal~\cite{TE18} additionally study the Bitcoin gap game between block reward and transaction fee. 

\noindent \textbf{Bitcoin-NG Incentive Analysis.}
Yin et al.~\cite{yin18bngrrevisit} extended the transaction fee distribution ratio after considering another situation that the original paper omits~\cite{bitcoinng}. Wang at al.~\cite{ziyu2019} considered advanced selfish mining strategies, i.e., stubborn mining strategies, when an attacker may manipulate the microblock chains between two honest parties. However, these prior works have several limitations in the incentive analysis, as explained in  Sec.~\ref{sec:introduction}. 

\section{Conclusion}\label{sec:conclusion}
In this paper, we have proposed a new incentive analysis of Bitcoin-NG considering the network capacity. Our model enables us to evaluate the impact of key-block generation interval and microblock generation rate, which is missing in the previous analysis. In particular, we have shown that Bitcoin-NG can still maintain incentive compatibility against the microblock mining attack even under network capacity constraints. In addition, we have modeled the selfish mining of key blocks and microblocks jointly into an MDP and shown the threshold of Bitcoin-NG is a little lower than in Bitcoin only when the selfish mining power $\alpha$ is greater than $35\%$. 
We hope that our in-depth incentive analysis for Bitcoin-NG can shed some light on the mechanism design and incentive analysis of next-generation blockchain protocols.

\bibliographystyle{plain}
\bibliography{bib}

\appendix
\section{Concentration Bounds}
\begin{lemma}[Chernoff bound for a sum of dependent random variables~\cite{niu2019analysis}]\label{lem:key_step}
Let $T$ be a positive integer. Let $X^{(j)} = \sum_{i = 0}^{n-1} X_{j + iT}$ be the sum of $n$ independent indicator random variables and $\mu_j = E\left( X^{(j)} \right)$ for $j \in \{1, \ldots, T\}$. Let $X = X^{(1)} + \cdots + X^{(T)}$. 
Let $\mu = \min_j \{ \mu_j \}$. Then, for $0 < \delta < 1$, $\Pr\left( X \le (1 - \delta) \mu T \right) \le e^{-\delta^2 \mu / 2}$.
\end{lemma}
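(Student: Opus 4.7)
The plan is to reduce this dependent-sum Chernoff bound to the standard multiplicative Chernoff bound for independent indicators, exploiting the fact that the decomposition $X = X^{(1)}+\cdots+X^{(T)}$ is constructed precisely so that the summands \emph{within} each group $X^{(j)}$ are independent, even though summands across different groups may be correlated. This separation is what makes the reduction possible.

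First I would establish a pigeonhole containment for the deviation event. Since each $X^{(j)}\ge 0$ and $X$ is the sum of $T$ such terms, if $X\le(1-\delta)\mu T$ then at least one group must satisfy $X^{(j)}\le (1-\delta)\mu$: otherwise every $X^{(j)}>(1-\delta)\mu$ would force $X>(1-\delta)\mu T$, a contradiction. This gives
\[
\{X\le (1-\delta)\mu T\}\;\subseteq\;\bigcup_{j=1}^{T}\{X^{(j)}\le (1-\delta)\mu\}.
\]
Next, for each fixed $j$ I would invoke the classical multiplicative Chernoff bound on $X^{(j)}$, which is a sum of $n$ independent indicators with mean $\mu_j\ge \mu$. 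Writing $(1-\delta)\mu=(1-\delta_j)\mu_j$ determines $\delta_j=1-(1-\delta)\mu/\mu_j$; since $\mu_j\ge\mu$ we have $\delta_j\ge\delta\in(0,1)$, so the standard bound yields $\Pr(X^{(j)}\le (1-\delta_j)\mu_j)\le\exp(-\delta_j^{2}\mu_j/2)$. Using $\delta_j\ge\delta$ and $\mu_j\ge\mu$ jointly gives $\exp(-\delta_j^{2}\mu_j/2)\le\exp(-\delta^{2}\mu/2)$, and a union bound over the $T$ groups completes the argument, yielding a bound of the form $T\exp(-\delta^{2}\mu/2)$. The constant factor $T$ is absorbed into the $\Omega(\cdot)$ notation in the paper's applications (Lemmas~\ref{lem:keypair} and~\ref{lem:keypair2} use $T=2$).

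The main obstacle is justifying that the substitution $\delta_j\ge\delta$ together with $\mu_j\ge\mu$ produces a \emph{stronger} exponent rather than a weaker one. The argument needs care because both quantities move in the same direction: a larger group mean makes its own $\delta_j$ larger, which only helps, but one must confirm this monotonicity explicitly since tightening $\mu$ to the minimum could at first glance appear to waste slack. A secondary subtlety is choosing the integer stride $T$ so that each shifted subsequence $X_j, X_{j+T}, X_{j+2T},\dots$ is genuinely independent; in the applications this follows from the Bernoulli key-block labels $(X_i,X_{i+1})$ being functions of only two consecutive labels, so taking $T=2$ decorrelates $Z_1,Z_3,\dots$ and $Z_2,Z_4,\dots$ as required.
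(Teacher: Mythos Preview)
The paper does not prove Lemma~\ref{lem:key_step}; it is stated in the appendix with a citation to~\cite{niu2019analysis} and no argument is supplied. So there is no in-paper proof to compare against.

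Your approach is the standard one and is essentially correct: the pigeonhole containment $\{X\le(1-\delta)\mu T\}\subseteq\bigcup_j\{X^{(j)}\le(1-\delta)\mu\}$ is valid, the groupwise application of the classical lower-tail Chernoff bound is legitimate since each $X^{(j)}$ is a sum of independent indicators, and your check that $\delta_j\ge\delta$ together with $\mu_j\ge\mu$ forces $\delta_j^{2}\mu_j\ge\delta^{2}\mu$ is correct. The only discrepancy, which you already flag, is that the union bound over the $T$ groups yields $T\,e^{-\delta^{2}\mu/2}$ rather than the stated $e^{-\delta^{2}\mu/2}$. For the uses in Lemmas~\ref{lem:keypair} and~\ref{lem:keypair2} (where $T=2$) this extra factor is harmless and is absorbed into the $\Omega(\cdot)$ exponent exactly as you say, but strictly speaking your argument proves a slightly weaker inequality than the one written. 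Either the stated lemma carries an implicit constant, or the cited source removes the factor by a sharper route; for the purposes of this paper your proof suffices.
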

\end{document}